\let\csname equation*\endcsname\relax
\let\csname endequation*\endcsname\relax
\begin{document}

\title[Geometry of  Kirkwood-Dirac classical states : A case study based on discrete Fourier transform]{Geometry of Kirkwood-Dirac classical states: A case study based on discrete Fourier transform}

\author{Ying-Hui Yang$^{1}$, Shuang Yao$^{1}$, Shi-Jiao Geng$^{1}$, Xiao-Li Wang$^{1}$ and Pei-Ying Chen$^{1}$}

\address{
$^{1}$ School of Mathematics and Information Science, Henan Polytechnic University, Jiaozuo 454000, China
}

\ead{yangyinghui4149@163.com, yaoshuang0818@163.com}
\vspace{10pt}
\begin{indented}
\item[]
\end{indented}

\begin{abstract}
The characterization of Kirkwood-Dirac (KD) classicality or non-classicality is very important  in quantum information processing. In general, the set of KD classical states with respect to two bases is not a convex polytope[J. Math. Phys. \textbf{65} 072201 (2024)], which makes us interested in finding out in which circumnstances they do form a polytope. In this paper, we focus on the characterization of KD classicality of mixed states for the case where the transition matrix between two bases is a discrete Fourier transform (DFT) matrix in Hilbert space with dimensions $p^2$ and $pq$, respectively, where  $p, q$ are prime. For the two particular cases we investigate, the sets of extremal points are finite, implying that the set of KD classical states we characterize forms a convex polytope. We show that for $p^2$ dimensional system, the set $\rm{KD}_{\mathcal{A},\mathcal{B}}^+$  is a convex hull of the set $\rm {pure}({\rm {KD}_{\mathcal{A},\mathcal{B}}^+})$  based on DFT, where $\rm{KD}_{\mathcal{A},\mathcal{B}}^+$ is the set of KD classical states with respect to  two bases and $\rm {pure}({\rm {KD}_{\mathcal{A},\mathcal{B}}^+})$ is the set of all the rank-one projectors of KD classical pure states  with respect to two bases. In $pq$ dimensional system, we believe that this result also holds. Unfortunately, we do not completely prove it, but some meaningful conclusions are obtained about the characterization of KD classicality.

%We study the geometry of Kirkwood-Dirac classical states based on discrete Fourier transform (DFT) in Hilbert space with dimensions $p^2$ and $pq$ respectively, where $p, q$ are prime.
%Langrenez et al.  showed that, in general, the set of KD classical states is not a convex polytope [J. Math. Phys. \textbf{65} 072201 (2024)], which makes us interested in finding out in which circumnstances they do form a polytope.  For the two particular cases  we investigate,  the sets of extremal points are finite, implying that the set of KD classical states we characterize form a convex polytope.
%In this paper, we first consider the $p^2$ dimensional system, and show that the set $\rm{KD}_{\mathcal{A},\mathcal{B}}^+$ of KD classical states based on DFT is a convex combination of the set $\rm {pure}({\rm {KD}_{\mathcal{A},\mathcal{B}}^+})$ of all the projectors of  KD classical pure states with respect to $\mathcal{A}$ and $\mathcal{B}$, that is,  $\rm{KD}_{\mathcal{A},\mathcal{B}}^+={\rm ConvHull}(\rm {pure}({\rm {KD}_{\mathcal{A},\mathcal{B}}^+}))$. We believe that this result also holds in $pq$ dimensional system. Although we do not completely prove it, some important conclusions are obtained in $pq$ dimensional system. Maybe it is merely a step away from ultimate result.
\end{abstract}

%
% Uncomment for keywords
%\vspace{2pc}
\noindent{\it Keywords}: Kirkwood-Dirac classical state, discrete Fourier transform, KD quasiprobability distribution, convex combination
%
% Uncomment for Submitted to journal title message
%\submitto{\JPA}
%
% Uncomment if a separate title page is required
\maketitle
%
% For two-column output uncomment the next line and choose [10pt] rather than [12pt] in the \documentclass declaration
%\ioptwocol
%

\section{Introduction}
%In quantum mechanics, the nonclassical features  of quantum states may have many properties, such as  and negativity\cite{negativity.1,negativity.2,negativity.3} or nonreality of quasiprobability distributions\cite{Budiyono.2023}. These  properties play an important role in quantum information theory and metrology.

The negativity\cite{negativity.1,negativity.2,negativity.3,Arvidsson.2024,Halpern.2018} or nonreality of quasiprobability distributions\cite{Budiyono.2023}, like entanglement\cite{entanglement}, discord\cite{discord}, coherence\cite{coherence,coherence.2017},  Bell nonlocality\cite{nonlocality} and uncertainty principles\cite{uncertainty}, shows the nonclassical features  of quantum states and plays an important role in quantum information theory and metrology. Kirkwood-Dirac (KD) distribution is a quasiprobability distribution independently discovered by Kirkwood \cite{Kirkwood.1933} and Dirac\cite{Dirac.1945}. KD distributions behave similarly to probability distributions in some aspects, but are allowed to violate some Kolmogorovs axioms\cite{Kolmogorov.1951}. Traditionally, the axiom requiring the distribution function to be nonnegative everywhere on the state space is relaxed, to allow for negative, or even imaginary values. In recent years, KD distribution has become a research hotspot because of its  important applications in  quantum tomography\cite{Lundeen.2012,Bamber.2014,Thekkadath.2016}, weak measurement\cite{Pusey.2014,Dressel.2015,Lupu.2022}, quantum metrology\cite{Lupu.2022,Arvidsson.2020,Jenne.2022}, quantum thermodynamics\cite{Halpern.2018,Levy.2020,Lostaglio.2020}, the relation to nonclassical effects\cite{Spekkens.2008,Lostaglio.2018} and so on. In addition, it has been shown that the KD distribution for states can be extended to a representation of any  finite-dimensional quantum process\cite{Schmid.KD.representations}.

Given a state, its KD distribution is determined by the eigenbases of observables. In a $d$-dimensional Hilbert space, suppose $\boldsymbol{a}=\{|a_i\rangle\}_{i=0}^{d-1}$ and $\boldsymbol{b}=\{|b_j\rangle\}_{j=0}^{d-1}$ are the eigenbases of observables $A$ and $B$, respectively. Let $\mathcal{A}=\{|a_i\rangle\langle a_i|\}_{i=0}^{d-1}$ and $\mathcal{B}=\{|b_j\rangle\langle b_j|\}_{j=0}^{d-1}$. If the KD distribution  of a quantum state with respect to two bases $\boldsymbol{a}, \boldsymbol{b}$ contains negative or nonreal values, then the quantum state is called a  KD nonclassical state; otherwise, the quantum state is called a  KD classical state\cite{ArvidssonJPA.2021,BievrePRL.2021,BievreJMP.2023,Xu.PLA2024,Yang.2023}.

In 2021, Arvidsson-Shuku et al.\cite{ArvidssonJPA.2021} gave a sufficient condition for judging nonclassicality. That is, a state $|\psi\rangle$ is KD nonclassical if $n_{\boldsymbol{a}}(\psi)+n_{\boldsymbol{b}}(\psi)>\lfloor\frac{3d}{2}\rfloor$, where $n_{\boldsymbol{a}}(\psi)$ (respectively $n_{\boldsymbol{b}}(\psi)$) is the number of nonvanishing coefficients of $|\psi\rangle$ on basis $\boldsymbol{a}$ (respectively on basis $\boldsymbol{b}$) and $\lfloor X\rfloor$ is the integer part of $X$.
De Bi\`{e}vre\cite{BievrePRL.2021,BievreJMP.2023} studied completely incompatible observables and their links to the support uncertainty and to KD nonclassicality of a pure state. He  showed that a state $|\psi\rangle$ is KD nonclassical if $n_{\boldsymbol{a}}(\psi)+n_{\boldsymbol{b}}(\psi)>d+1$ and $\langle a_i|b_j\rangle\neq0$ for all $i,j\in\mathbb{Z}_d$.
In 2022, Xu\cite{XuPRA.2022} generalized the concept of completely incompatibility to $s$-order incompatibility. Fiorentino et al.  \cite{Fiorentino.2022} studied the support uncertainty relation of complete sets of mutually unbiased bases. In 2023, for discrete Fourier transform (DFT) matrix as the transition matrix between two bases $\boldsymbol{a}$ and $\boldsymbol{b}$, a pure state $|\psi\rangle$  is real KD classical state if and only if $|\psi\rangle$ satisfies $n_{\boldsymbol{a}}(\psi)n_{\boldsymbol{b}}(\psi)=d$ \cite{Xu.PLA2024,Yang.2023}. In 2023, Langrenez et al.\cite{Langrenez.2023} first analyzed how to characterize KD classical mixed states and showed that for prime $d$ and DFT matrix as the transition matrix, the set $\rm{KD}_{\mathcal{A},\mathcal{B}}^+$ of KD classical states is the convex hull of rank-one projectors associated to the two bases, i.e., $\rm{KD}_{\mathcal{A},\mathcal{B}}^+=$ConvHull$(\mathcal{A}\cup\mathcal{B})$. However, for nonprime $d$, it is still an open question how to characterize KD classical states.

In this paper, we focus on how to characterize KD classical mixed states for nonprime $d$ and the DFT matrix as the transition matrix. We show that for $d=p^{2}$, the set $\rm{KD}_{\mathcal{A},\mathcal{B}}^+$ of KD classical states is the convex hull of rank-one projectors associated to all the KD classical pure states with respect to a pair of bases $\boldsymbol{a}$, $\boldsymbol{b}$, i.e., $\rm{KD}_{\mathcal{A},\mathcal{B}}^+=$ConvHull(pure(KD$^{+}_{\mathcal{A},\mathcal{B}}$)), where pure(KD$^{+}_{\mathcal{A},\mathcal{B}}$) is the set of rank-one projectors associated to all the KD classical pure states with respect to  the pair $\boldsymbol{a}$, $\boldsymbol{b}$ and $p$ is prime. We believe that this result is also true for $d=pq$, where $p$ and $q$ are prime. Although we cannot give a complete proof yet, some relevant results are given. Maybe it is merely a step away from ultimate result.

The rest of this paper is organized as follows. In section 2, some relevant notions are recalled. In section 3.1, we prepare some lemmas to prove the geometry of the set $\rm{KD}_{\mathcal{A},\mathcal{B}}^+$ of KD classical states for $d=p^{2}$. Then we characterize the geometry of KD classical states for $d=p^{2}$ in section 3.2. In section 4, several results about KD classical states for $d=pq$ are given, where $p,q$ are prime and $p\neq q$. Conclusions and discussions are given in section 5.

%The research on mixed KD classical states in reference \cite{ref4} makes us have a preliminary understanding of the structure of mixed KD classical states, but obviously the research on mixed KD classical states needs to be further deepened. When the transition matrix $U$ is a DFT matrix, only the case of prime number dimension is discussed. It is not clear what properties of mixed KD classical states are when it is non-prime number dimension. In this paper, we consider the dimensions are $p^2$ and $pq$ respectively($p$ and $q$ are prime) and prove that $\rm{KD}_{\mathcal{A},\mathcal{B}}^+=conv(\Omega)$ in these two cases, where $Pure(KD_{\mathcal{A},\mathcal{B}}^+)$ is the union of the projections of all KD classical pure states.

\section{Preliminaries}
\newtheorem{definition}{Definition}
\newtheorem{lemma}{Lemma}
\newtheorem{theorem}{Theorem}
\newtheorem{corollary}{Corollary}
\newtheorem{proposition}{Proposition}
\newtheorem{conjecture}{Conjecture}
\newtheorem{example}{Example}

\def\QEDclosed{\mbox{\rule[0pt]{1.3ex}{1.3ex}}}
\def\QED{\QEDclosed}
\def\proof{\indent{\em Proof}.}
\def\endproof{\hspace*{\fill}~\QED\par\endtrivlist\unskip}

Consider a $d$-dimensional Hilbert space $\mathcal{H}$. Let $\boldsymbol{a}=\{|a_i\rangle\}_{i=0}^{d-1}$ and $\boldsymbol{b}=\{|b_j\rangle\}_{j=0}^{d-1}$ be two orthonormal bases in $\mathcal{H}$. A matrix $U$ is the transition matrix between the two bases and its entries are $U_{ij}=\langle a_i|b_j\rangle$, where $\ i,j\in\mathbb Z_d$ and $\mathbb{Z}_d=\{0, 1 ,..., d-1\}$. In terms of these two bases $\boldsymbol{a}$ and $\boldsymbol{b}$, the KD distribution of a density matrix $\rho$ can be written as
\begin{equation}
\label{KD-probali}
Q_{ij}(\rho)=\langle b_j|a_i\rangle\langle a_i|\rho|b_j\rangle,\ i,j\in\mathbb Z_d.
\end{equation}
Each value $Q_{ij}(\rho)$ for each choice of $i,j$ and each choice of $\rho$ is also known as a Bargmann
 invariant, specially in the case where $\rho$ is a pure state\cite{Bargmann.2001,Bargmann.2003,Bargmann.2020,Bargmann.2024,Bargmann.arXiv2024}.

KD distribution is a quasi-probability distribution and satisfies
\begin{equation}
\sum_{i,j=0}^{d-1}Q_{ij}(\rho)=1, \sum_{j=0}^{d-1}Q_{ij}(\rho)=\langle a_i|\rho|a_i\rangle, \sum_{i=0}^{d-1}Q_{ij}(\rho)=\langle b_j|\rho|b_j\rangle.\nonumber
\end{equation}
The state $\rho$ is called a KD classical state if its KD distribution is a probability distribution, i.e., $Q_{ij}(\rho)\geq0$ for $\forall i,j\in\mathbb Z_d$.

DFT matrix $U$ is widely used in quantum information processing. In a $d$-dimensional space $\mathcal{H}$, the entries of $U$ are $U_{ij}=\frac{1}{\sqrt{d}}\omega_{d}^{ij}$, where $\omega_{d}=e^{\frac{2\pi\sqrt{-1}}{d}}$ and $i,j\in \mathbb {Z}_{d}$. In this paper, we only consider the case that the transition matrix between the bases $\boldsymbol{a}$ and $\boldsymbol{b}$ is the DFT matrix $U$.

For the DFT matrix $U$ as the transition matrix between two bases $\boldsymbol{a}$ and $\boldsymbol{b}$, the characterization of KD classical pure states has been completed. That is, a state $|\psi\rangle$ is KD classical  if and only if $n_{\boldsymbol{a}}(\psi)n_{\boldsymbol{b}}(\psi)=d$ \cite{Xu.PLA2024,Yang.2023}, where $n_{\boldsymbol{a}}(\psi)$ (respectively $n_{\boldsymbol{b}}(\psi)$) is the number of nonvanishing coefficients of $|\psi\rangle$ on basis $\boldsymbol{a}$ (respectively on basis $\boldsymbol{b}$). While the characterization of KD classical mixed states has not been completed yet. According to Krein-Milman theorem\cite{Hiriart.2004}, the set ${\rm KD}_{\mathcal{A},\mathcal{B}}^+$ of KD classical states with respect to $\boldsymbol{a}$ and  $\boldsymbol{b}$ is the convex hull of its extreme points set $\rm{\rm{ext(\rm{\rm{KD}_{\mathcal{A},\mathcal{B}}^{+}})}}$, i.e.,
\begin{equation}
{\rm{KD}_{\mathcal{A},\mathcal{B}}^+}=\textrm{ConvHull}(\rm{ext(\rm{\rm{KD}_{\mathcal{A},\mathcal{B}}^{+}})}),\nonumber
\end{equation}
where $\mathcal{A}=\{|a_i\rangle\langle a_i|\}_{i=0}^{d-1}$ and $\mathcal{B}=\{|b_j\rangle\langle b_j|\}_{j=0}^{d-1}$. The set $\rm{ext(\rm{\rm{KD}_{\mathcal{A},\mathcal{B}}^{+}})}$ always contains all basis states $\{|a_i\rangle\}_{i=0}^{d-1}$ and $\{|b_j\rangle\}_{j=0}^{d-1}$. Meanwhile,  $\rm{ext(\rm{\rm{KD}_{\mathcal{A},\mathcal{B}}^{+}})}$ may also contain other pure states or mixed states. For  general $\mathcal{A}$ and $\mathcal{B}$, one has ${\rm {pure}}({\rm {KD}}_{\mathcal{A},\mathcal{B}}^+)\subseteq  \rm{ext(\rm{\rm{KD}_{\mathcal{A},\mathcal{B}}^{+}})}$\cite{Langrenez.2023}, where  $\rm {pure}({\rm {KD}_{\mathcal{A},\mathcal{B}}^+})$ is the set of all the projectors of  KD classical pure states with respect to $\boldsymbol{a}$ and $\boldsymbol{b}$, we are interested in investigating when the converse holds. However, it is not easy to obtain a complete description of $\rm{ext(\rm{\rm{KD}_{\mathcal{A},\mathcal{B}}^{+}})}$.  Langrenez et al. showed that for a specific case, that is, prime $d$ and DFT matrix $U$ as the transition matrix, $\rm{KD}_{\mathcal{A},\mathcal{B}}^+=\textrm{ConvHull}(\mathcal{A}\cup\mathcal{B})$  holds\cite{Langrenez.2023}.

For the DFT matrix $U$, KD classical pure states have been completely found out \cite{BievrePRL.2021,Xu.2023}, i.e.,
\begin{eqnarray}
\label{psi-ms}
|\psi_{ms}\rangle=\frac{1}{\sqrt{q}}\sum_{k=0}^{q-1}\omega_q^{sk}|a_{kp+m}\rangle =\frac{1}{\sqrt{p}}\omega_d^{-ms}\sum_{l=0}^{p-1}\omega_p^{-ml}|b_{lq+s}\rangle,\ m\in\mathbb{Z}_p, \ s\in\mathbb{Z}_q, \nonumber\\
\end{eqnarray}
where $d=pq$ and $p,q$ need not be prime. If $p=d$ and $q=1$, then $\{|\psi_{ms}\rangle\}_{m\in\mathbb{Z}_p, s\in\mathbb{Z}_q}=\{|a_i\rangle\}_{i=0}^{d-1}=\boldsymbol{a}$. If $p=1$ and $q=d$, then $\{|\psi_{ms}\rangle\}_{m\in\mathbb{Z}_p, s\in\mathbb{Z}_q}=\{|b_i\rangle\}_{i=0}^{d-1}=\boldsymbol{b}$. The set $\rm {pure}({\rm {KD}_{\mathcal{A},\mathcal{B}}^+})$ can be expressed  as
\begin{eqnarray}
\label{set-pureclassical}
{\rm {pure}}({\rm {KD}}_{\mathcal{A},\mathcal{B}}^+)=\{|\psi_{ms}\rangle\langle\psi_{ms}| \mid m\in\mathbb{Z}_p, s\in\mathbb{Z}_q, pq=d\}.
\end{eqnarray}
Note that in equation (\ref{set-pureclassical}), the divisors $p,q$ take over all the decompositions of $d$. We have the following conjecture for general $d$.

\begin{conjecture}
\label{conjecture}
$\rm{KD}_{\mathcal{A},\mathcal{B}}^+=\rm{ConvHull}(\rm {pure}({\rm {KD}_{\mathcal{A},\mathcal{B}}^+}))$.
\end{conjecture}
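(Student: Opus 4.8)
The plan is to reduce the conjecture to a statement purely about the extreme points. By the Krein--Milman theorem one already has ${\rm KD}_{\mathcal{A},\mathcal{B}}^+={\rm ConvHull}({\rm ext}({\rm KD}_{\mathcal{A},\mathcal{B}}^+))$, and the inclusion ${\rm pure}({\rm KD}_{\mathcal{A},\mathcal{B}}^+)\subseteq {\rm ext}({\rm KD}_{\mathcal{A},\mathcal{B}}^+)$ is known. Hence the conjecture is equivalent to the reverse inclusion ${\rm ext}({\rm KD}_{\mathcal{A},\mathcal{B}}^+)\subseteq {\rm pure}({\rm KD}_{\mathcal{A},\mathcal{B}}^+)$: every extreme point must be one of the rank-one projectors $|\psi_{ms}\rangle\langle\psi_{ms}|$. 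Since any rank-one element of ${\rm KD}_{\mathcal{A},\mathcal{B}}^+$ is automatically a KD classical pure state, and these are classified exactly by $n_{\boldsymbol a}(\psi)n_{\boldsymbol b}(\psi)=d$, the whole problem collapses to a single claim: \emph{every extreme point of ${\rm KD}_{\mathcal{A},\mathcal{B}}^+$ has rank one}. I would take this as the target.

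To attack it, I would use the facial/perturbation description of extremality. A state $\rho\in{\rm KD}_{\mathcal{A},\mathcal{B}}^+$ is extreme if and only if the only traceless Hermitian $H$ that is supported on the range of $\rho$ (so that $\rho\pm\varepsilon H$ stays positive semidefinite for small $\varepsilon>0$), has real KD distribution, and satisfies $Q_{ij}(H)=0$ on the zero set $Z(\rho)=\{(i,j):Q_{ij}(\rho)=0\}$, is $H=0$. This turns extremality into a dimension count: the free directions number about $({\rm rank}\,\rho)^2$, while the binding constraints are $|Z(\rho)|$ together with the reality conditions $Q_{ij}(H)\in\mathbb{R}$. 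The key input is the explicit rectangle structure of the pure states: a direct computation gives $Q_{ij}(|\psi_{ms}\rangle\langle\psi_{ms}|)=\tfrac1d$ on the block $\{i\equiv m\ (\mathrm{mod}\ p)\}\times\{j\equiv s\ (\mathrm{mod}\ q)\}$ and $0$ elsewhere, for each factorization $pq=d$. Thus $Z(\rho)$ is governed by how the $\boldsymbol a$- and $\boldsymbol b$-supports of $\rho$ meet these residue-class rectangles, and the support uncertainty relation (with equality $n_{\boldsymbol a}n_{\boldsymbol b}=d$ marking the classical pure states) forces any higher-rank $\rho$ to have a comparatively large positive support, i.e.\ a small $Z(\rho)$. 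I would show that whenever ${\rm rank}\,\rho\ge2$ this leaves enough freedom to construct a nonzero admissible $H$, contradicting extremality. An equivalent and more constructive route is a greedy decomposition: locate a rectangle $R_{m,s}$ lying inside the positive support of $Q(\rho)$, subtract the maximal multiple $\lambda\,|\psi_{ms}\rangle\langle\psi_{ms}|$ keeping $\rho-\lambda|\psi_{ms}\rangle\langle\psi_{ms}|$ both positive semidefinite and KD classical, and induct on the support size.

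The case $d=p^2$ succeeds because there are only the three factorizations $1\cdot p^2$, $p\cdot p$, $p^2\cdot1$, so the rectangles come in the three aspect ratios $1\times p^2$, $p\times p$, $p^2\times1$ and their residue-class geometry is controlled by the single prime $p$; the support bookkeeping and the subtraction then close up. The main obstacle in general is exactly what breaks this. For composite $d$ with several distinct prime factors the divisor pairs $pq=d$ proliferate, the associated rectangles appear in many incommensurate aspect ratios, and their overlaps produce a far more intricate combinatorial geometry of possible zero patterns $Z(\rho)$. Establishing that the zero pattern of an arbitrary extreme point is \emph{exactly} a union of such rectangles (so that the minimal feasible configuration is a single rectangle, i.e.\ a pure state) is the crux, and I expect it to be where the argument stalls. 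Via the Chinese Remainder Theorem $\mathbb{Z}_{pq}\cong\mathbb{Z}_p\times\mathbb{Z}_q$ one can try to factor the DFT (prime-factor/Good--Thomas style) and reduce to a tensor picture, but the accompanying index twist obstructs a clean product decomposition, and---most delicately---one must control the positive-semidefiniteness of $\rho$ \emph{jointly} with the purely combinatorial KD constraints. It is this coupling between the spectral (PSD) and the combinatorial (zero-pattern) conditions that I anticipate to be the genuine difficulty, and presumably why the $d=pq$ case remains open.
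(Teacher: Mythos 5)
You have not proved the conjecture, but neither has the paper: the statement is posed there as a conjecture, settled only for prime $d$ (citing Langrenez et al.) and for $d=p^2$, with $d=pq$ left open. For the cases the paper does settle, your route is genuinely different. The paper never analyzes extreme points; its engine is a linear-algebra dimension count. It shows $\dim {\rm span}_{\mathbb{R}}(\mathcal{A}\cup\mathcal{B}\cup\mathcal{C})=3p^2-2p$, characterizes the space ${\rm KD}_{\mathcal{A},\mathcal{B}}^r$ of KD-real Hermitian operators by the entrywise symmetry $F_{i(i+k)}=F_{(i-k)i}$, counts $\dim{\rm KD}_{\mathcal{A},\mathcal{B}}^r=3p^2-2p$ so the two spaces coincide, and feeds this into the equivalence ``${\rm KD}_{\mathcal{A},\mathcal{B}}^r={\rm span}_{\mathbb{R}}$ iff ${\rm KD}_{\mathcal{A},\mathcal{B}}^+={\rm ConvHull}$'', whose nontrivial half rests on an explicit coefficient rewriting using the resolution identities $\sum_s\psi_{ms}=\sum_k a_{kp+m}$ and $\sum_m\psi_{ms}=\sum_l b_{lp+s}$. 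Your reduction via Krein--Milman and Milman to the claim that every extreme point of ${\rm KD}_{\mathcal{A},\mathcal{B}}^+$ has rank one is valid, and your computation that $Q_{ij}(\psi_{ms})=1/d$ on a residue-class rectangle and $0$ elsewhere is correct and is essentially the same structural fact the paper exploits. If your facial route could be completed it would give more geometric insight and a constructive decomposition; the paper's count is shorter but explains little about why $d=pq$ resists.

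The genuine gap is the one you flag yourself, but the dimension-count heuristic at the core of your perturbation step is not merely incomplete --- it is misleading as stated. You weigh roughly $({\rm rank}\,\rho)^2$ free directions against $|Z(\rho)|$ zero constraints ``together with the reality conditions'', but the reality conditions dominate: for $d=p^2$ the KD-real Hermitian operators form a $(3p^2-2p)$-dimensional subspace of the $p^4$-dimensional real space of Hermitian matrices, so requiring $Q_{ij}(H)\in\mathbb{R}$ already imposes on the order of $p^4$ independent real conditions on $H$, while a rank-$2$ state offers only a $3$-dimensional space of candidate perturbations before any KD constraint is imposed. Nothing in the sketch guarantees a nonzero admissible $H$ survives, and proving that one always does is as hard as the original problem. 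The greedy-subtraction alternative has the parallel difficulty that after removing $\lambda\,\psi_{ms}$ you must certify the remainder is still positive semidefinite \emph{and} KD-positive; this coupling of the spectral and combinatorial conditions is exactly where both your plan and the paper's $d=pq$ attempt stall (the paper's obstruction takes a different form --- it cannot establish ${\rm KD}_{\mathcal{A},\mathcal{B}}^+\cap{\rm span}_{\mathbb{R}}(\mathcal{A}\cup\mathcal{B}\cup\mathcal{C}\cup\mathcal{D})={\rm ConvHull}(\mathcal{A}\cup\mathcal{B}\cup\mathcal{C}\cup\mathcal{D})$ when all four families are present --- but it is the same underlying tension).
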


This conjecture is true for prime $d$. In this paper, we prove that this conjecture is true for $d=p^2$, and also try our best to prove the conjecture for $d=pq$, where $p$ and $q$ are prime. But the complete proof for $d=pq$ has not been obtained yet.  Nonetheless, some relevant results are obtained. In what follows, $p,q$ are always prime.

If $d=p^2$, then equation (\ref{psi-ms}) can be written as
\begin{eqnarray}
\label{p2-psi}
|\psi_{ms}\rangle=\frac{1}{\sqrt{p}}\sum_{k=0}^{p-1}\omega_p^{sk}|a_{kp+m}\rangle
=\frac{1}{\sqrt{p}}\omega_d^{-ms}\sum_{l=0}^{p-1}\omega_p^{-ml}|b_{lp+s}\rangle,\ m,s\in\mathbb{Z}_p.
\end{eqnarray}
Let $\mathcal{A}=\{|a_i\rangle\langle a_i|\}_{i=0}^{d-1}$, $\mathcal{B}=\{|b_j\rangle\langle b_j|\}_{j=0}^{d-1}$ and $\mathcal{C}=\{|\psi_{ms}\rangle \langle\psi_{ms}|\}_{m,s\in\mathbb{Z}_p}$. In fact, the three sets $\mathcal{A},\mathcal{B},\mathcal{C}$ correspond to three types of factorizations of $d$, i.e., $d=d\times 1$, $d=1\times d$ and $d=p\times p$, respectively. Thus, $\rm {pure}({\rm {KD}_{\mathcal{A},\mathcal{B}}^+})=\mathcal{A}\cup\mathcal{B}\cup\mathcal{C}$.

If $d=pq$ and $p\neq q$, then  besides the basis states $\boldsymbol{a}$ and $\boldsymbol{b}$, there are another two types of KD classical states by equation (\ref{psi-ms}), i.e.,
\numparts
\begin{eqnarray}
\label{pq-psi}
|\psi_{ms}\rangle&=\frac{1}{\sqrt{q}}\sum_{k=0}^{q-1}\omega_q^{sk}|a_{kp+m}\rangle \nonumber\\
&=\frac{1}{\sqrt{p}}\omega_d^{-ms}\sum_{l=0}^{p-1}\omega_p^{-ml}|b_{lq+s}\rangle,\ m\in\mathbb{Z}_p, \ s\in\mathbb{Z}_q,  \\
\label{pq-varphi}
|\varphi_{m's'}\rangle&=\frac{1}{\sqrt{p}}\sum_{k=0}^{p-1}\omega_p^{s'k}|a_{kq+m'}\rangle  \nonumber\\
&=\frac{1}{\sqrt{q}}\omega_d^{-m's'}\sum_{l=0}^{q-1}\omega_q^{-m'l}|b_{lp+s'}\rangle, \ m'\in\mathbb{Z}_q, \ s'\in\mathbb{Z}_p.
\end{eqnarray}
\endnumparts
Let $\mathcal{A}$ and $\mathcal{B}$ still be projector sets of basis states $\boldsymbol{a}$ and $\boldsymbol{b}$, respectively. Let $\mathcal{C}=\{|\psi_{ms}\rangle \langle\psi_{ms}|\}_{m\in\mathbb{Z}_p,s\in\mathbb{Z}_q}$ and $\mathcal{D}=\{|\varphi_{m's'}\rangle\langle\varphi_{m's'}|\}_{m'\in\mathbb{Z}_q, s'\in\mathbb{Z}_p}$. Here, the sets $\mathcal{C}, \mathcal{D}$ correspond to $d=q\times p$ and $d=p\times q$, respectively. Obviously, $\rm {pure}({\rm {KD}_{\mathcal{A},\mathcal{B}}^+})=\mathcal{A}\cup\mathcal{B}\cup\mathcal{C}\cup\mathcal{D}$.

In what follows, we try to prove the conjecture for $d=p^{2}$ and $d=pq$, respectively.

\section{Geometry of Kirkwood-Dirac classical states for $d=p^2$}

\subsection{Characterizing $\rm{span}_{\mathbb{R}}(\mathcal{A}\cup\mathcal{B}\cup\mathcal{C})$ and $\rm{KD}_{\mathcal{A},\mathcal{B}}^r$}
In this section, we will adopt a research framework similar to that in  reference \cite{Langrenez.2023}. We first study the vector space $\rm{span}_{\mathbb{R}}(\mathcal{A}\cup\mathcal{B}\cup\mathcal{C})$ and explore the relation among $\rm{KD}_{\mathcal{A},\mathcal{B}}^+$, $\rm{span}_{\mathbb{R}}(\mathcal{A}\cup\mathcal{B}\cup\mathcal{C})$ and ${\rm{ConvHull}}(\mathcal{A}\cup\mathcal{B}\cup\mathcal{C})$. For simplicity, the projector of a pure state $|\varphi\rangle$ is denoted by $\varphi:=|\varphi\rangle\langle \varphi|$ in the whole paper.

\begin{lemma}
\label{lemma-1}
If $d=p^2$, then $\dim {\rm span}_{\mathbb{R}}(\mathcal{A}\cup \mathcal{B}\cup \mathcal{C})=3p^2-2p$ and $\rm{KD}_{\mathcal{A},\mathcal{B}}^+\cap span_{\mathbb{R}}(\mathcal{A}\cup \mathcal{B}\cup \mathcal{C})=\rm{ConvHull}(\mathcal{A}\cup \mathcal{B}\cup \mathcal{C})$.
\end{lemma}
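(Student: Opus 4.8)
The plan is to prove the two assertions separately: the dimension count is cleanest in the Weyl (clock--shift) basis, while the identification of the intersection with the convex hull follows from a gauge-reparametrization argument driven by the explicit KD distributions of the generators. On the $\boldsymbol a$-basis introduce the shift and clock operators $X|a_i\rangle=|a_{i+1}\rangle$ and $Z|a_i\rangle=\omega_d^{\,i}|a_i\rangle$ (indices mod $d=p^2$), so that $\{X^rZ^t\}_{r,t\in\mathbb Z_d}$ is a basis of $M_d(\mathbb C)$. Direct expansions give $a_i=\frac1d\sum_t\omega_d^{-it}Z^t$ and $b_j=\frac1d\sum_t\omega_d^{\,jt}X^t$, so $\mathrm{span}_{\mathbb C}\mathcal A=\mathrm{span}\{Z^t\}$ and $\mathrm{span}_{\mathbb C}\mathcal B=\mathrm{span}\{X^r\}$. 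Using $\omega_p=\omega_d^{\,p}$ and (\ref{p2-psi}) one finds $\psi_{ms}=\frac1{p^2}\sum_{r,u\in\mathbb Z_p}\omega_p^{\,sr-mu}X^{rp}Z^{up}$, an invertible two-dimensional DFT in $(m,s)$, whence $\mathrm{span}_{\mathbb C}\mathcal C=\mathrm{span}\{X^{rp}Z^{up}:r,u\in\mathbb Z_p\}$. Thus each family spans a coordinate subspace in the Weyl basis, indexed respectively by the lines $\{(0,t)\}$, $\{(r,0)\}$, and by $\{(r',t'):r'\equiv t'\equiv0\pmod p\}$.

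Counting the union of these index sets by inclusion--exclusion gives $p^2+p^2+p^2-1-p-p+1=3p^2-2p$ distinct Weyl operators, and since for a set of Hermitian matrices the real and complex spans have equal dimension (because $M_d(\mathbb C)=\mathrm{Herm}(d)\otimes_{\mathbb R}\mathbb C$), this yields $\dim\mathrm{span}_{\mathbb R}(\mathcal A\cup\mathcal B\cup\mathcal C)=3p^2-2p$. Equivalently one can exhibit the $2p$ relations $\sum_{s}\psi_{ms}=\sum_k a_{kp+m}$ ($m\in\mathbb Z_p$) and $\sum_m\psi_{ms}=\sum_l b_{lp+s}$ ($s\in\mathbb Z_p$), check their linear independence for the upper bound, and invoke the Weyl count for the matching lower bound; the lower bound, i.e.\ that there are no further relations, is the only delicate point, and the phase-space computation settles it cleanly.

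For the second identity, the inclusion $\mathrm{ConvHull}(\mathcal A\cup\mathcal B\cup\mathcal C)\subseteq\mathrm{KD}^+_{\mathcal A,\mathcal B}\cap\mathrm{span}_{\mathbb R}(\mathcal A\cup\mathcal B\cup\mathcal C)$ is immediate: each generator is the projector of a KD classical pure state (each satisfies $n_{\boldsymbol a}n_{\boldsymbol b}=d$), $\mathrm{KD}^+_{\mathcal A,\mathcal B}$ is convex since $\rho\mapsto Q_{ij}(\rho)$ is linear, and the convex hull lies in the span. For the reverse inclusion take $\rho=\sum_i\alpha_ia_i+\sum_j\beta_jb_j+\sum_{m,s}\gamma_{ms}\psi_{ms}$ with real coefficients (possible by hypothesis). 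A short computation from (\ref{KD-probali}) gives $Q_{ij}(a_{i'})=\frac1d\delta_{ii'}$, $Q_{ij}(b_{j'})=\frac1d\delta_{jj'}$, and $Q_{ij}(\psi_{ms})=\frac1d$ if $i\equiv m$ and $j\equiv s\pmod p$ and $0$ otherwise; by linearity $Q_{ij}(\rho)=\frac1d\bigl(\alpha_i+\beta_j+\gamma_{m(i)s(j)}\bigr)$, where $m(i)=i\bmod p$ and $s(j)=j\bmod p$. Hence KD classicality is exactly the inequalities $\alpha_i+\beta_j+\gamma_{m(i)s(j)}\ge0$.

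Finally, use the $2p$ relations as gauge freedom: for any reals $\lambda_m,\mu_s$ one may replace $\alpha_i\mapsto\alpha_i-\lambda_{m(i)}$, $\beta_j\mapsto\beta_j-\mu_{s(j)}$, $\gamma_{ms}\mapsto\gamma_{ms}+\lambda_m+\mu_s$ without changing $\rho$, nor the total coefficient sum, which equals $\mathrm{tr}\,\rho=1$. Choosing $\lambda_m=\min_{i\equiv m}\alpha_i$ and $\mu_s=\min_{j\equiv s}\beta_j$ makes the new $\alpha$'s and $\beta$'s nonnegative by construction, while evaluating the classicality inequality at indices attaining these two minima forces $\gamma_{ms}+\lambda_m+\mu_s\ge0$; thus all new coefficients are nonnegative and sum to $1$, exhibiting $\rho$ as a convex combination of $\mathcal A\cup\mathcal B\cup\mathcal C$. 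The genuinely nontrivial ingredients are therefore the Weyl-basis span computations for the exact dimension and the observation that the minimizing gauge choice is forced to succeed precisely by the classicality inequalities read at the minima; I expect the dimension lower bound to be the main obstacle if one tries to avoid the phase-space language.
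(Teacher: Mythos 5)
Your proposal is correct, and it splits naturally into a half that genuinely differs from the paper and a half that coincides with it. For the dimension count the paper does not use phase-space methods at all: it defines the coefficient map $\Gamma:\mathbb{R}^{3p^2}\to\mathrm{span}_{\mathbb{R}}(\mathcal A\cup\mathcal B\cup\mathcal C)$ and computes $\dim\ker\Gamma=2p$ directly, by evaluating $\langle a_i|\Upsilon|b_j\rangle=\langle a_i|b_j\rangle(\lambda_i+\mu_j+\gamma_{ms})$ and exploiting $\langle a_i|b_j\rangle\neq0$ to reduce the relation $\Upsilon=0$ to $\lambda_{kp+m}=\lambda_m$, $\mu_{lp+s}=\mu_s$, $\gamma_{ms}=-(\lambda_m+\mu_s)$. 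Your Weyl-basis computation --- identifying $\mathrm{span}_{\mathbb C}\mathcal A$, $\mathrm{span}_{\mathbb C}\mathcal B$, $\mathrm{span}_{\mathbb C}\mathcal C$ with the coordinate subspaces indexed by $\{(0,t)\}$, $\{(r,0)\}$, $\{(rp,up)\}$ and counting by inclusion--exclusion, then passing from complex to real dimension via $V\cap iV=\{0\}$ for Hermitian spans --- is a valid alternative that makes the lower bound (no hidden relations) transparent and would generalize cleanly to other divisor structures of $d$; the paper's route is more elementary but buries the same information in the kernel computation. Note, though, that the paper's evaluation of $\langle a_i|\Upsilon|b_j\rangle$ is exactly your observation that $Q_{ij}$ of the three generator families is $\tfrac1d$ times an indicator, so the two computations are closer than they first appear. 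For the second identity your argument is essentially the paper's: both reduce KD positivity to the inequalities $\alpha_i+\beta_j+\gamma_{m(i)s(j)}\ge0$, both subtract the per-residue-class minima of the $a$- and $b$-coefficients, and both absorb the subtracted pieces into the $\psi_{ms}$ coefficients via the identities $\sum_s\psi_{ms}=\sum_k a_{kp+m}$ and $\sum_m\psi_{ms}=\sum_l b_{lp+s}$, with the positivity inequality evaluated at the minimizers guaranteeing the new $\gamma$'s are nonnegative. Your presentation of this step as a gauge reparametrization that preserves both $\rho$ and the total coefficient sum is a tidy repackaging but not a different proof.
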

\begin{proof}
Let us first show the dimension of $\textrm{span}_{\mathbb{R}}(\mathcal{A}\cup \mathcal{B}\cup \mathcal{C})$. Define a linear map $\Gamma:\mathbb{R}^{3p^2}\rightarrow \rm{span}_{\mathbb{R}}(\mathcal{A}\cup \mathcal{B}\cup \mathcal{C})$, for each $((\lambda_i),(\mu_j),(\gamma_{ms}))\in\mathbb{R}^{3p^2}$,
\begin{equation}
\Gamma((\lambda_i),(\mu_j),(\gamma_{ms}))=\sum_{i=0}^{p^2-1}\lambda_ia_i+\sum_{j=0}^{p^2-1}\mu_jb_j+\sum_{m,s=0}^{p-1}\gamma_{ms}\psi_{ms},\nonumber
\end{equation}
where $((\lambda_i),(\mu_j),(\gamma_{ms}))=(\lambda_0,\ldots,\lambda_{p^2-1},\mu_0,\ldots,\mu_{p^2-1},\gamma_{00},\ldots,\gamma_{(p-1)(p-1)})$. Note that here we use the notation $a_i=|a_i\rangle\langle a_i|$.
From the rank theorem, we have $\dim\rm{span}_{\mathbb{R}}(\mathcal{A}\cup \mathcal{B}\cup \mathcal{C})$ $= 3p^2-\dim(\rm{Ker} (\Gamma))$.

Suppose
\begin{equation}
\Upsilon=\sum_{i=0}^{p^2-1}\lambda_ia_i+\sum_{j=0}^{p^2-1}\mu_jb_j+\sum_{m,s=0}^{p-1}\gamma_{ms}\psi_{ms}=0, \nonumber
\end{equation}
where $\lambda_i,\mu_j,\gamma_{ms}\in \mathbb{R}$. We will calculate $\dim(\rm{Ker} (\Gamma))$ below.
Since $\Upsilon=0$, one has
\begin{equation}
\langle a_i|\Upsilon|b_j\rangle=\langle a_i|b_j\rangle(\lambda_i+\mu_j+\gamma_{ms}) = 0,\ i,j\in\mathbb Z_{p^2},\nonumber
\end{equation}
where $i \equiv m \mod p$, $j \equiv s \mod p$, i.e., $i = kp+m$, $j = lp+s$, where $k, l\in\mathbb Z_p$.
Since $\langle a_i|b_j\rangle =\frac{1}{\sqrt{d}}\omega_{d}^{ij} \neq 0$ , it follows $\lambda_i+\mu_j+\gamma_{ms} = 0$, i.e.,
\begin{equation}
\label{lam-mu-ga}
\lambda_{kp+m}+\mu_{lp+s}+\gamma_{ms} = 0, \ \textrm{for} \ \forall\ k,l,m,s\in\mathbb Z_p.
\end{equation}
It follows $\lambda_{m}+\mu_{lp+s}+\gamma_{ms}=\lambda_{kp+m}+\mu_{lp+s}+\gamma_{ms}$. Thus $\lambda_m = \lambda_{kp+m}$, for $k\in\mathbb Z_p$.
Similarly, $\mu_s = \mu_{lp+s}$, for $l\in\mathbb Z_p$.
It implies that equation (\ref{lam-mu-ga}) can be written as $\lambda_m+\mu_s+\gamma_{ms} = 0$, i.e.,
\begin{equation}
\label{ms_}
\gamma_{ms} = -(\lambda_m+\mu_s).
\end{equation}
From equation (\ref{ms_}), one can obtain $\dim(\rm{Ker} (\Gamma))=$ $2p$, and then $\dim\rm{span}_{\mathbb{R}}(\mathcal{A}\cup \mathcal{B}\cup \mathcal{C}) =$ $3p^2 - 2p$.

Next, we prove the second statement. By equation (\ref{KD-probali}), it is easy to verify $\rm{ConvHull}(\mathcal{A}\cup \mathcal{B}\cup \mathcal{C})\subseteq \rm{KD}_{\mathcal{A},\mathcal{B}}^+$. Together with the fact that $\rm{ConvHull}(\mathcal{A}\cup \mathcal{B}\cup \mathcal{C})\subseteq \rm{span}_{\mathbb{R}}(\mathcal{A}\cup \mathcal{B}\cup \mathcal{C})$, this shows $\rm{ConvHull}(\mathcal{A}\cup \mathcal{B}\cup \mathcal{C})\subseteq \rm{KD}_{\mathcal{A},\mathcal{B}}^+\cap \rm{span}_{\mathbb{R}}(\mathcal{A}\cup \mathcal{B}\cup \mathcal{C})$. Now we only need to prove $\rm{KD}_{\mathcal{A},\mathcal{B}}^+\cap \rm{span}_{\mathbb{R}}(\mathcal{A}\cup \mathcal{B}\cup \mathcal{C})\subseteq \rm{ConvHull}(\mathcal{A}\cup \mathcal{B}\cup \mathcal{C})$.

Suppose $\rho\in\rm{KD}_{\mathcal{A},\mathcal{B}}^+\cap \rm{span}_{\mathbb{R}}(\mathcal{A}\cup \mathcal{B}\cup \mathcal{C})$. There exists a vector $((\lambda_i),(\mu_j),(\gamma_{ms}))\in\mathbb{R}^{3p^2}$ so that
\begin{equation}
\label{eq2}
\rho=\sum_{i=0}^{p^2-1}\lambda_ia_i+\sum_{j=0}^{p^2-1}\mu_jb_j+\sum_{m,s=0}^{p-1}\gamma_{ms}\psi_{ms}.
\end{equation}
Since $\rho\in\rm{KD}_{\mathcal{A},\mathcal{B}}^+$, it follows
\begin{equation}
Q_{ij}(\rho)=|\langle a_i|b_j\rangle|^2(\lambda_i+\mu_j+\gamma_{ms}) \geqslant 0,\ \forall\ i,j\in\mathbb Z_{p^2}, \nonumber
\end{equation}
where $i \equiv m \mod p$, $j \equiv s \mod p$, i.e., $i=kp + m, j=lp + s$, where $k,l\in \mathbb Z_p$. Since $|\langle a_i|b_j\rangle|^2\geq0$, it follows  $\lambda_i+\mu_j+\gamma_{ms} \geq 0$, i.e.,
\begin{equation}
\label{eq-gamma}
\lambda_{kp + m}+\mu_{lp + s}+\gamma_{ms} \geq 0,\ \forall\ m,s,k,l\in \mathbb Z_p.
\end{equation}
Equation (\ref{eq2}) can be written as
\begin{equation}
\label{eq3}
\rho=\sum_{m,k=0}^{p-1}\lambda_{kp + m}a_{kp + m}+\sum_{s,l=0}^{p-1}\mu_{lp + s}b_{lp + s}+\sum_{m,s=0}^{p-1}\gamma_{ms}\psi_{ms}.
\end{equation}

In order to see that all the coefficients of $a_{kp + m}$ and $b_{lp + s}$ are non-negative, without loss of generality, suppose that ${\rm min}_{k\in \mathbb Z_p}\{\lambda_{kp+m}\}=\lambda_m$ for $m\in \mathbb Z_p$, ${\rm min}_{l\in \mathbb Z_p}\{\mu_{lp+s}\}=\mu_s$  for $s\in \mathbb Z_p$. Then equation (\ref{eq3}) can be written as
\begin{eqnarray}
\label{eq4}
\rho &=&\sum_{m,k=0}^{p-1}(\lambda_{kp+m}-\lambda_m)a_{kp+m}+\sum_{s,l=0}^{p-1}(\mu_{lp+s}-\mu_s)b_{lp+s} \nonumber\\
&\;&+\sum_{m,k=0}^{p-1}\lambda_ma_{kp+m} +\sum_{s,l=0}^{p-1}\mu_sb_{lp+s}+\sum_{m,s=0}^{p-1}\gamma_{ms}\psi_{ms}.
\end{eqnarray}
Notice that $\lambda_{kp+m}-\lambda_m\geq0$ and $\mu_{lp+s}-\mu_s\geq0$.

Now let us consider the relation among the projectors $a_{kp+m}$, $b_{lp+s}$ and $\psi_{ms}$.
By equation (\ref{p2-psi}), we have
\begin{eqnarray}
\psi_{ms}&=&\frac{1}{p}(\sum_{k=0}^{p-1}a_{kp+m}+\sum_{k_1\neq k_2}\omega_p^{s(k_1-k_2)} |a_{k_{1}p+m}\rangle\langle a_{k_{2}p+m}|) \nonumber\\
&=&\frac{1}{p}(\sum_{l=0}^{p-1}b_{lp+s}+\sum_{l_1\neq l_2}\omega_p^{-m(l_1-l_2)}|b_{l_{1}p+s}\rangle\langle b_{l_{2}p+s}|).\nonumber
\end{eqnarray}
Hence,
\numparts
\begin{equation}
\label{a-{ms}}
\forall m\in\mathbb Z_p, \sum_{s=0}^{p-1}\psi_{ms}=\sum_{k=0}^{p-1}a_{kp+m},
\end{equation}
\begin{equation}
\label{b-{ms}}
\forall s\in\mathbb Z_p, \sum_{m=0}^{p-1}\psi_{ms}=\sum_{l=0}^{p-1}b_{lp+s}.
\end{equation}
\endnumparts
Substituting equations (\ref{a-{ms}}) and (\ref{b-{ms}}) into equation (\ref{eq4}), equation (\ref{eq4}) can be written as
\begin{equation}
\label{eq6}
\rho=\sum_{m,k=0}^{p-1}(\lambda_{kp+m}-\lambda_m)a_{kp+m}+\sum_{s,l=0}^{p-1}(\mu_{lp+s}-\mu_s)b_{lp+s}
+\sum_{m,s=0}^{p-1}(\lambda_m+\mu_s+\gamma_{ms})\psi_{ms}.
\end{equation}
Note that $\lambda_m+\mu_s+\gamma_{ms}\geq0$ from equation (\ref{eq-gamma}). Then all coefficients of $\rho$ in equation (\ref{eq6}) are non-negative. Hence $\rho\in {\rm span}_{\mathbb{R^{+}}}(\mathcal{A}\cup \mathcal{B}\cup \mathcal{C})$. Thus $\rho\in {\rm ConvHull}(\mathcal{A}\cup \mathcal{B}\cup \mathcal{C})$ since ${\rm Tr}\rho=1$. This completes the proof.
\end{proof}

A self-adjoint operator $F$ is called a KD real operator if $Q_{ij}(F)$ are real for $\forall i,j\in\mathbb Z_d$. We denote the set of KD real operators by $\rm{KD}_{\mathcal{A},\mathcal{B}}^r$. It is easy to verify that $\rm{KD}_{\mathcal{A},\mathcal{B}}^r$ is a linear space. Obviously, $\rm{KD}_{\mathcal{A},\mathcal{B}}^+\subseteq \rm{KD}_{\mathcal{A},\mathcal{B}}^r$ since a KD classical state is KD real.
For all $\rho\in  {\rm span}_{\mathbb{R}}(\mathcal{A}\cup \mathcal{B}\cup\mathcal{C})$,  $Q_{ij}(\rho)$ are real for all $\ i,j\in\mathbb Z_d$, hence ${\rm span}_{\mathbb{R}}(\mathcal{A}\cup \mathcal{B}\cup\mathcal{C})\subseteq
\rm{KD}_{\mathcal{A},\mathcal{B}}^r$. In reference \cite{Langrenez.2023}, the authors showed that $\rm{KD}_{\mathcal{A},\mathcal{B}}^r = {\rm span}_{\mathbb{R}}(\mathcal{A}\cup \mathcal{B})$ if and only if $\rm{KD}_{\mathcal{A},\mathcal{B}}^+= {\rm ConvHull}(\mathcal{A}\cup \mathcal{B})$ for prime $d$. Here, applying lemma \ref{lemma-1}, we obtain a similar conclusion for $d=p^2$.
\begin{lemma}
\label{lemma-3}
$\rm{KD}_{\mathcal{A},\mathcal{B}}^r = span_{\mathbb{R}}({\mathcal{A}\cup \mathcal{B}\cup \mathcal{C}})$ if and only if $\rm{KD}_{\mathcal{A},\mathcal{B}}^+= ConvHull(\mathcal{A}\cup \mathcal{B}\cup \mathcal{C})$ for $d=p^2$.
\end{lemma}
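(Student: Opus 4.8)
The plan is to reduce both implications to a single dimension-independent identity,
\[
\mathrm{span}_{\mathbb{R}}\bigl(\mathrm{KD}_{\mathcal{A},\mathcal{B}}^{+}\bigr)=\mathrm{KD}_{\mathcal{A},\mathcal{B}}^{r},
\]
which says that the real span of the KD classical states is exactly the space of KD real operators. Once this is in hand, the lemma follows almost formally, one direction from Lemma \ref{lemma-1} and the other from the elementary fact that a set and its convex hull have the same real span.

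First I would establish the identity. The inclusion $\subseteq$ is immediate, since $\mathrm{KD}_{\mathcal{A},\mathcal{B}}^{+}\subseteq\mathrm{KD}_{\mathcal{A},\mathcal{B}}^{r}$ and the latter is a linear space. For the reverse inclusion, take any self-adjoint $F\in\mathrm{KD}_{\mathcal{A},\mathcal{B}}^{r}$ and split off its trace, writing $F=\frac{\mathrm{Tr}F}{d}I+F_{0}$ with $\mathrm{Tr}F_{0}=0$ and $F_{0}\in\mathrm{KD}_{\mathcal{A},\mathcal{B}}^{r}$. Since $\sum_{i}a_{i}=I$ and the maximally mixed state $I/d$ is KD classical for the DFT (indeed $Q_{ij}(I/d)=\frac{1}{d}|\langle a_{i}|b_{j}\rangle|^{2}=\frac{1}{d^{2}}>0$), it remains only to place $F_{0}$ in the span. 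Here I would perturb around $I/d$: set $\sigma_{\epsilon}=I/d+\epsilon F_{0}$. Because every $|\langle a_{i}|b_{j}\rangle|^{2}$ equals $1/d$ and is strictly positive, $Q_{ij}(\sigma_{\epsilon})=\frac{1}{d^{2}}+\epsilon\,Q_{ij}(F_{0})$ stays nonnegative for all the finitely many pairs $(i,j)$ once $|\epsilon|$ is small; moreover $\sigma_{\epsilon}$ has unit trace and, being a small perturbation of the positive-definite operator $I/d$, remains positive semidefinite. Hence $\sigma_{\epsilon}\in\mathrm{KD}_{\mathcal{A},\mathcal{B}}^{+}$ for small $\epsilon$, and $F_{0}=\frac{1}{\epsilon}(\sigma_{\epsilon}-I/d)\in\mathrm{span}_{\mathbb{R}}(\mathrm{KD}_{\mathcal{A},\mathcal{B}}^{+})$, which gives the claim.

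With the identity available, the two directions are short. For the forward implication, assume $\mathrm{KD}_{\mathcal{A},\mathcal{B}}^{r}=\mathrm{span}_{\mathbb{R}}(\mathcal{A}\cup\mathcal{B}\cup\mathcal{C})$. Then $\mathrm{KD}_{\mathcal{A},\mathcal{B}}^{+}\subseteq\mathrm{KD}_{\mathcal{A},\mathcal{B}}^{r}=\mathrm{span}_{\mathbb{R}}(\mathcal{A}\cup\mathcal{B}\cup\mathcal{C})$, so intersecting with this span changes nothing, and Lemma \ref{lemma-1} yields $\mathrm{KD}_{\mathcal{A},\mathcal{B}}^{+}=\mathrm{KD}_{\mathcal{A},\mathcal{B}}^{+}\cap\mathrm{span}_{\mathbb{R}}(\mathcal{A}\cup\mathcal{B}\cup\mathcal{C})=\mathrm{ConvHull}(\mathcal{A}\cup\mathcal{B}\cup\mathcal{C})$. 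For the converse, assume $\mathrm{KD}_{\mathcal{A},\mathcal{B}}^{+}=\mathrm{ConvHull}(\mathcal{A}\cup\mathcal{B}\cup\mathcal{C})$ and apply the identity: $\mathrm{KD}_{\mathcal{A},\mathcal{B}}^{r}=\mathrm{span}_{\mathbb{R}}(\mathrm{KD}_{\mathcal{A},\mathcal{B}}^{+})=\mathrm{span}_{\mathbb{R}}(\mathrm{ConvHull}(\mathcal{A}\cup\mathcal{B}\cup\mathcal{C}))=\mathrm{span}_{\mathbb{R}}(\mathcal{A}\cup\mathcal{B}\cup\mathcal{C})$, the last equality holding since $S\subseteq\mathrm{ConvHull}(S)\subseteq\mathrm{span}_{\mathbb{R}}(S)$ forces equal spans.

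The only genuinely substantive step is the perturbation identity, and even there the sole point requiring care is that positivity of the KD distribution and positive semidefiniteness are preserved \emph{simultaneously} under a small perturbation of $I/d$; both rest on the strict positivity of all $|\langle a_{i}|b_{j}\rangle|^{2}$ for the DFT, so nothing specific to $d=p^{2}$ actually enters and the same argument would run in any dimension. The remaining steps are purely formal, drawing on Lemma \ref{lemma-1} for the forward direction and on $\mathrm{span}_{\mathbb{R}}(\mathrm{ConvHull}(S))=\mathrm{span}_{\mathbb{R}}(S)$ for the converse.
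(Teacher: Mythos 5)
Your proof is correct. The forward direction is exactly the paper's: $\mathrm{KD}_{\mathcal{A},\mathcal{B}}^{+}\subseteq\mathrm{KD}_{\mathcal{A},\mathcal{B}}^{r}=\mathrm{span}_{\mathbb{R}}(\mathcal{A}\cup\mathcal{B}\cup\mathcal{C})$, then Lemma \ref{lemma-1}. For the converse the underlying mechanism is also the same --- a small perturbation of $\rho_{*}=\mathbb{I}_d/d$ by a traceless KD real operator, with nonnegativity of the $Q_{ij}$ and positive semidefiniteness both controlled because every $|\langle a_i|b_j\rangle|^2=1/d>0$ --- but you package it differently. The paper argues by contraposition: it assumes $\mathrm{span}_{\mathbb{R}}(\mathcal{A}\cup\mathcal{B}\cup\mathcal{C})$ is a \emph{proper} subspace of $\mathrm{KD}_{\mathcal{A},\mathcal{B}}^{r}$, picks a nonzero $F$ in the orthogonal complement $W$ (using orthogonality to the $a_i$ and $b_j$ to get $\mathrm{Tr}\,F=0$), and exhibits $\rho(x)=\rho_*+xF\in\mathrm{KD}_{\mathcal{A},\mathcal{B}}^{+}\setminus\mathrm{ConvHull}(\mathcal{A}\cup\mathcal{B}\cup\mathcal{C})$. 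You instead prove the standalone identity $\mathrm{span}_{\mathbb{R}}(\mathrm{KD}_{\mathcal{A},\mathcal{B}}^{+})=\mathrm{KD}_{\mathcal{A},\mathcal{B}}^{r}$ directly (splitting off the trace and using $\mathbb{I}_d=\sum_i a_i$), and then the converse collapses to $\mathrm{span}_{\mathbb{R}}(\mathrm{ConvHull}(S))=\mathrm{span}_{\mathbb{R}}(S)$. Your version buys a cleaner, reusable statement that is manifestly independent of $d=p^2$ and of the particular family $\mathcal{A}\cup\mathcal{B}\cup\mathcal{C}$ (it would serve equally for the $d=pq$ discussion in section 4), and it avoids introducing the complement $W$; the paper's version keeps the perturbation localized to the one direction where it is needed. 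Both are valid, and the only place $d=p^2$ genuinely enters is through Lemma \ref{lemma-1} in the forward direction, exactly as you note.
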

\begin{proof}
We first prove the necessity. Since ${\rm ConvHull}(\mathcal{A}\cup \mathcal{B}\cup \mathcal{C})\subseteq \rm{KD}_{\mathcal{A},\mathcal{B}}^+$ is obviously true, we only need to show $\rm{KD}_{\mathcal{A},\mathcal{B}}^+\subseteq {\rm ConvHull}(\mathcal{A}\cup \mathcal{B}\cup \mathcal{C})$. Let $\rho\in \rm{KD}_{\mathcal{A},\mathcal{B}}^+$, then we get that $\rho\in \rm{KD}_{\mathcal{A},\mathcal{B}}^r$. Thus $\rho\in {\rm span}_{\mathbb{R}}(\mathcal{A}\cup \mathcal{B}\cup \mathcal{C})$ since $\rm{KD}_{\mathcal{A},\mathcal{B}}^r = {\rm span}_{\mathbb{R}}(\mathcal{A}\cup \mathcal{B}\cup \mathcal{C})$. Applying lemma \ref{lemma-1}, it follows $\rho\in {\rm ConvHull}(\mathcal{A}\cup \mathcal{B}\cup \mathcal{C})$, which means $\rm{KD}_{\mathcal{A},\mathcal{B}}^+\subseteq {\rm ConvHull}(\mathcal{A}\cup \mathcal{B}\cup \mathcal{C})$. The desired result is obtained.

The proof of sufficiency is almost the same as that in reference \cite{Langrenez.2023}. For the convenience of readers and the completeness of the article, it is given in \ref{Proof of Lemma 3}.
\end{proof}

For prime $d$, a self-adjoint operator $F \in \rm{KD}_{\mathcal{A},\mathcal{B}}^r$ if and only if the entries of $F$ satisfy $ F_{i(i+k)}=F_{(i-k)i}$ for all $i,k \in\mathbb Z_d$ \cite{Langrenez.2023}. We improve the proof method of this result in reference \cite{Langrenez.2023} and get the following lemma. The following lemma shows that this result is always true whether $d$ is prime or not.  The lemma is a powerful tool to prove  conjecture \ref{conjecture}. It can help us to better understand $\rm{KD}_{\mathcal{A},\mathcal{B}}^r$.
\begin{lemma}
\label{lemma-4}
For any dimension $d$, a self-adjoint operator $F \in \rm{KD}_{\mathcal{A},\mathcal{B}}^r$ if and only if
\begin{equation}
\label{eq-F}
F_{i(i+k)}=F_{(i-k)i}, \forall \ i,k \in\mathbb Z_d,
\end{equation}
where $F_{ik}=\langle a_i|F|a_k\rangle$.
\end{lemma}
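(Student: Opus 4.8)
The plan is to reduce the reality condition on the numbers $Q_{ij}(F)$ to a discrete Fourier statement and then invert it. First I would expand $Q_{ij}(F)$ in terms of the $\boldsymbol{a}$-basis matrix entries $F_{ik}=\langle a_i|F|a_k\rangle$. Using $\langle a_i|b_j\rangle=\frac{1}{\sqrt d}\omega_d^{ij}$, hence $\langle b_j|a_i\rangle=\frac{1}{\sqrt d}\omega_d^{-ij}$, and inserting the resolution of identity $\sum_k|a_k\rangle\langle a_k|$ into $\langle a_i|F|b_j\rangle$, a direct computation followed by the index shift $k\mapsto i+k$ (read modulo $d$) gives
\begin{equation}
Q_{ij}(F)=\langle b_j|a_i\rangle\langle a_i|F|b_j\rangle=\frac{1}{d}\sum_{k\in\mathbb Z_d}F_{i(i+k)}\,\omega_d^{kj}.\nonumber
\end{equation}
Thus, for each fixed $i$, the family $(Q_{ij}(F))_{j\in\mathbb Z_d}$ is, up to the factor $1/d$, the discrete Fourier transform of the sequence $(F_{i(i+k)})_{k\in\mathbb Z_d}$.

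Next I would compute the complex conjugate and invoke self-adjointness. Since $F=F^\dagger$ gives $\overline{F_{i(i+k)}}=F_{(i+k)i}$, conjugating the display above and then reflecting the summation index $k\mapsto -k$ yields
\begin{equation}
\overline{Q_{ij}(F)}=\frac{1}{d}\sum_{k\in\mathbb Z_d}F_{(i+k)i}\,\omega_d^{-kj}=\frac{1}{d}\sum_{k\in\mathbb Z_d}F_{(i-k)i}\,\omega_d^{kj},\nonumber
\end{equation}
so $\overline{Q_{ij}(F)}$ is again a discrete Fourier transform, now of the sequence $(F_{(i-k)i})_{k\in\mathbb Z_d}$.

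Then the condition $F\in\mathrm{KD}_{\mathcal{A},\mathcal{B}}^r$, i.e. $Q_{ij}(F)=\overline{Q_{ij}(F)}$ for all $i,j$, becomes the equality of these two transforms for every $j$ and each fixed $i$. Because the characters $j\mapsto\omega_d^{kj}$, $k\in\mathbb Z_d$, are linearly independent over $\mathbb C$ (equivalently, the $d\times d$ DFT matrix is invertible), equality of the transforms for all $j$ is equivalent to equality of the coefficients, namely $F_{i(i+k)}=F_{(i-k)i}$ for all $k$. Since every step of this argument is reversible, running it for each $i$ establishes the equivalence in both directions, which is exactly \eqref{eq-F}.

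The computation is entirely explicit, and I do not expect a genuine obstacle; the only point requiring care is the index bookkeeping modulo $d$, in particular the two reindexings (the shift $k\mapsto i+k$ converting $F_{ik}$ into $F_{i(i+k)}$, and the reflection $k\mapsto -k$ in the conjugate). The substantive ingredient is the injectivity of the finite Fourier transform, which is what allows the passage from ``real for every $j$'' to a coefficient-wise identity; notably, this step uses only invertibility of the DFT and never the primality of $d$, which is precisely why the lemma holds for every dimension.
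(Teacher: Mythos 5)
Your proposal is correct and rests on the same two ingredients as the paper's proof: the expansion $Q_{ij}(F)=\frac{1}{d}\sum_{k}F_{i(i+k)}\omega_d^{kj}$, exhibiting $(Q_{ij}(F))_j$ as the DFT of $(F_{i(i+k)})_k$ for fixed $i$, and the invertibility of that transform (the paper phrases this as a Vandermonde argument). Your execution is cleaner, though: by comparing $Q_{ij}(F)$ directly with $\overline{Q_{ij}(F)}$ rather than isolating $\mathrm{Im}\,Q_{ij}(F)$ and pairing $k$ with $d-k$, you avoid the paper's separate treatment of odd and even $d$ and obtain the identity $F_{i(i+k)}=F_{(i-k)i}$ for all $k$ at once.
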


The proof of lemma \ref{lemma-4} is given in  \ref{The proof of lemma 4}.

If a self-adjoint operator $F \in \rm{KD}_{\mathcal{A},\mathcal{B}}^r$, then the diagonal entries $F_{ii}$ are always real and
\begin{equation}
\label{F-k-(d-k)}
F_{i(i+k)}=\overline {F_{(i+k)i}}=\overline {F_{(i-(d-k))i}} = \overline {F_{i(i+(d-k))}}
\end{equation}
since equation (\ref{eq-F}) holds. Equation (\ref{F-k-(d-k)}) implies that $F_{i(i+k)}$ and $F_{i(i+(d-k))}$ are conjugate. Suppose that all the diagonal entries $F_{ii}$ are as one category. By lemma \ref{lemma-4}, off-diagonal entries in $F$ can be classified into some distinct categories, where the same or conjugate entries are in a common category. Now we show how to classify the off-diagonal entries in $F$ by equation (\ref{eq-F}).

Equation (\ref{eq-F}) implies that for $\forall\ i\in\mathbb Z_{d}$,
\begin{equation}
\label{F-k}
F_{(i-k)i}=F_{i(i+k)}=\cdots=F_{(i+lk)(i+(l+1)k)}=\cdots=F_{(i+(d-1)k)(i+dk)}, l\in \mathbb Z_{d}.
\end{equation}
Notice that $i-k\equiv i+(d-1)k \mod d$ and $i\equiv i+dk \mod d$. Suppose gcd$(k,d)=t$, where gcd$(k,d)$ is the greatest common divisor of $k$ and $d$.

If $t=1$, then $i+lk\neq i+l'k \mod d$ for $l,l'\in \mathbb Z_{d}$ and $l\neq l'$. Otherwise, $l\equiv l' \mod d$. It implies that $d$ entries are equal in equation (\ref{F-k}). For the convenience of description, in the whole paper, we use ``length" to denote the number of equal elements of matrix $F$ in an equation. It follows the length of equation (\ref{F-k}) is $d$.

If $t>1$, then $i+lk\equiv i+l'k \mod d$ if and only if $l\equiv l' \mod \frac{d}{t}$. Thus, equation (\ref{F-k}) can be written as
\begin{equation}
\label{F-kt}
F_{(i-k)i}=F_{i(i+k)}=F_{(i+k)(i+2k)}=\cdots=F_{(i+(\frac{d}{t}-1)\times k)(i+\frac{d}{t}\times k)}.
\end{equation}
Since $i-k\equiv i+(\frac{d}{t}-1)\times k \mod d$ and $i\equiv i+\frac{d}{t}\times k \mod d$ in equation (\ref{F-kt}), the length of equation (\ref{F-kt}) is $\frac{d}{t}$.

In order to understand how to classify the off-diagonal entries in $F$ by equation (\ref{eq-F}) more intuitively,  and to understand more easily the proof of theorem \ref{theorem}, the following three examples are given.

\label{Three examples}
\begin{example}
\label{d=5}
The entries of a self-adjoint operator $F \in \rm{KD}_{\mathcal{A},\mathcal{B}}^r$ for $d=5$ can be classified into three categories.
\end{example}

Off-diagonal entries in $F$ can be classified into two categories.
From equation (\ref{F-k}) we have
\begin{equation}
\label{F-k-d=5}
F_{(i-k)i}=F_{i(i+k)}=F_{(i+k)(i+2k)}=\cdots=F_{(i+4k)(i+5k)},\forall\  i,k\in\mathbb Z_5.
\end{equation}
Since gcd$(k,5)=1$ for all $k\in\mathbb Z^*_5$, the length of equation (\ref{F-k-d=5}) is 5. For $k=1$ and $k=4$, we have $F_{i(i+1)}= \overline {F_{i(i+4)}}$ by equation (\ref{F-k-(d-k)}). Thus $F_{i(i+1)}$ and $F_{i(i+4)}$ are in a common category which has $10$ entries and is marked with green in matrix $F$. Similarly, for $k=2$ and $k=3$, $F_{i(i+2)}$ and $F_{i(i+3)}$ marked with red are in a common category. Together with the category of diagonal entries, there are three categories. See the following matrix $F$.
\begin{equation}
F=
\begin{pmatrix}
F_{00}& {\color{green}F_{01}}&{\color{red}F_{02}}&{\color{red}F_{03}}&{\color{green}F_{04}}\\
{\color{green}F_{10}}&F_{11}&{\color{green}F_{12}}&{\color{red}F_{13}}&{\color{red}F_{14}}\\
{\color{red}F_{20}}&{\color{green}F_{21}}&F_{22}&{\color{green}F_{23}}&{\color{red}F_{24}}\\
{\color{red}F_{30}}&{\color{red}F_{31}}&{\color{green}F_{32}}&F_{33}&{\color{green}F_{34}}\\
{\color{green}F_{40}}&{\color{red}F_{41}}&{\color{red}F_{42}}&{\color{green}F_{43}}&F_{44} \nonumber
\end{pmatrix}
\end{equation}

\begin{example} The entries of a self-adjoint operator $F \in \rm{KD}_{\mathcal{A},\mathcal{B}}^r$ for $d=9$ can be classified into seven categories.
\end{example}

Off-diagonal entries in $F$ are classified into six categories.
Consider two cases gcd$(k,9)=1$ and gcd$(k,9)=3$.

Case 1. gcd$(k,9)=1$, i.e., $k=1,2,4,5,7$ or $8$. It means that the length of equation (\ref{F-k}) is 9. By equation (\ref{F-k-(d-k)}), we have $F_{i(i+1)}= \overline {F_{i(i+8)}}$.
Thus $F_{i(i+1)}$ and $F_{i(i+8)}$ for all $i\in\mathbb Z_9$ are in a common category which has $18$ entries and is marked with orange color in matrix $F$. Similarly, $F_{i(i+2)}$ and $F_{i(i+7)}$, $F_{i(i+4)}$ and $F_{i(i+5)}$ are in a common category and are marked with red and blue color, respectively. There are three categories for this case.

Case 2. gcd$(k,9)=3$, i.e., $k=3$ or $6$. From equation (\ref{F-kt}) we have
\begin{equation}
\label{F-kt-d=9}
F_{(i-k)i}=F_{i(i+k)}=F_{(i+k)(i+2k)}=F_{(i+2k)(i+3k)}.
\end{equation}
The length of equation (\ref{F-kt-d=9}) is $3$. For $k=3$, equation (\ref{F-kt-d=9}) implies  $F_{03}=F_{36}=F_{60}$ for $i\equiv0 \mod 3$, $F_{14}=F_{47}=F_{71}$ for $i\equiv1 \mod 3$ and $F_{25}=F_{58}=F_{82}$ for $i\equiv2 \mod 3$. By equation (\ref{F-k-(d-k)}), $F_{i(i+3)}=\overline {F_{i(i+6)}}$ for all $i\in\mathbb Z_9$. Thus
\begin{eqnarray}
F_{03}=F_{36}=F_{60}=\overline {F_{30}}= \overline {F_{63}}=\overline {F_{06}}, \nonumber\\
F_{14}=F_{47}=F_{71}=\overline {F_{41}}= \overline {F_{74}}=\overline {F_{17}}, \nonumber\\
F_{25}=F_{58}=F_{82}=\overline {F_{52}}= \overline {F_{85}}=\overline {F_{28}}. \nonumber
\end{eqnarray}
There are three categories, marked with green, cyan and purple, respectively.
Together with the category of diagonal entries, there are seven categories. See the following matrix $F$.

\begin{equation}
F=
\begin{pmatrix}
F_{00}& \textcolor{orange}{F_{01}}&{\color{red}F_{02}}&{\color{green}F_{03}}&{\color{blue}F_{04}}&{\color{blue}F_{05}}&{\color{green}F_{06}}&{\color{red}F_{07}}&\textcolor{orange}{F_{08}}\\
\textcolor{orange}{F_{10}}&F_{11}&\textcolor{orange}{F_{12}}&{\color{red}F_{13}}&\textcolor{cyan}{F_{14}}&{\color{blue}F_{15}}&{\color{blue}F_{16}}&\textcolor{cyan}{F_{17}}&{\color{red}F_{18}}\\
{\color{red}F_{20}}&\textcolor{orange}{F_{21}}&F_{22}&\textcolor{orange}{F_{23}}&{\color{red}F_{24}}&{\color{purple}F_{25}}&{\color{blue}F_{26}}&{\color{blue}F_{27}}&{\color{purple}F_{28}}\\
{\color{green}F_{30}}&{\color{red}F_{31}}&\textcolor{orange}{F_{32}}&F_{33}&\textcolor{orange}{F_{34}}&{\color{red}F_{35}}&{\color{green}F_{36}}&{\color{blue}F_{37}}&{\color{blue}F_{38}}\\
{\color{blue}F_{40}}&\textcolor{cyan}{F_{41}}&{\color{red}F_{42}}&\textcolor{orange}{F_{43}}&F_{44}&\textcolor{orange}{F_{45}}&{\color{red}F_{46}}&\textcolor{cyan}{F_{47}}&{\color{blue}F_{48}}\\
{\color{blue}F_{50}}&{\color{blue}F_{51}}&{\color{purple}F_{52}}&{\color{red}F_{53}}&\textcolor{orange}{F_{54}}&F_{55}&\textcolor{orange}{F_{56}}&{\color{red}F_{57}}&{\color{purple}F_{58}}\\
{\color{green}F_{60}}&{\color{blue}F_{61}}&{\color{blue}F_{62}}&{\color{green}F_{63}}&{\color{red}F_{64}}&\textcolor{orange}{F_{65}}&F_{66}&\textcolor{orange}{F_{67}}&{\color{red}F_{68}}\\
{\color{red}F_{70}}&\textcolor{cyan}{F_{71}}&{\color{blue}F_{72}}&{\color{blue}F_{73}}&\textcolor{cyan}{F_{74}}&{\color{red}F_{75}}&\textcolor{orange}{F_{76}}&F_{77}&\textcolor{orange}{F_{78}}\\
\textcolor{orange}{F_{80}}&{\color{red}F_{81}}&{\color{purple}F_{82}}&{\color{blue}F_{83}}&{\color{blue}F_{84}}&{\color{purple}F_{85}}&{\color{red}F_{86}}&\textcolor{orange}{F_{87}}&F_{88}\nonumber
\end{pmatrix}
\end{equation}

\begin{example} The entries of a self-adjoint operator $F \in \rm{KD}_{\mathcal{A},\mathcal{B}}^r$ for $d=6$ can be classified into seven categories.
\end{example}

Off-diagonal entries in $F$ are classified into six categories. Consider three cases gcd$(k,6)=1$, gcd$(k,6)=2$ and gcd$(k,6)=3$.

Case 1. gcd$(k,d)=1$, i.e., $k=1$ or $5$.  It implies that the length of equation (\ref{F-k}) is 6.
By equation (\ref{F-k-(d-k)}), we have $F_{i(i+1)}= \overline {F_{i(i+5)}}$. Thus $F_{i(i+1)}$ and $F_{i(i+5)}$ for all $i\in\mathbb Z_6$ are in a common category which has $12$ entries, marked with cyan color in matrix $F$. There is one category for this case.

Case 2. gcd$(k,6)=2$, i.e., $k=2$ or $4$. From equation (\ref{F-kt}) we have
\begin{equation}
\label{F-kt-d=6}
F_{(i-k)i}=F_{i(i+k)}=F_{(i+k)(i+2k)}=F_{(i+2k)(i+3k)}.
\end{equation}
The length of equation (\ref{F-kt-d=6}) is $3$. For $k=2$, we have $F_{02}=F_{24}=F_{40}$ for $i\equiv0 \mod 2$ and $F_{13}=F_{35}=F_{51}$ for $i\equiv1 \mod 2$ by equation (\ref{F-kt-d=6}). By equation (\ref{F-k-(d-k)}), $F_{i(i+2)}=\overline {F_{i(i+4)}}$ for all $i\in\mathbb Z_6$. Thus
\begin{eqnarray}
F_{02}=F_{24}=F_{40}=\overline {F_{20}}= \overline {F_{42}}=\overline {F_{04}}, \nonumber\\
F_{13}=F_{35}=F_{51}= \overline {F_{31}}=\overline {F_{53}}=\overline {F_{51}}. \nonumber
\end{eqnarray}
There are two categories, marked with red and blue, respectively.

Case 3. gcd$(k,6)=3$, i.e., $k=3$. From equation (\ref{F-kt}) we have
\begin{equation}
\label{gcd=3-d=6}
F_{i(i+3)}=F_{(i+3)i}.
\end{equation}
The length of equation (\ref{gcd=3-d=6}) is $2$.  Then $F_{i(i+3)}=\overline {F_{i(i+3)}}$ for all $i\in\mathbb Z_6$ by equation (\ref{F-k-(d-k)}). It indicates that $F_{i(i+3)}$ is real. Equation (\ref{gcd=3-d=6}) implies $F_{03}=F_{30}$ for $i\equiv0 \mod 3$, $F_{14}=F_{41}$ for $i\equiv1 \mod 3$ and $F_{25}=F_{52}$ for $i\equiv2 \mod 3$. There are three categories, marked with green, orange and purple, respectively.
Together with the category of diagonal entries, there are seven categories. See the following matrix $F$.

\begin{equation}
F=
\begin{pmatrix}
F_{00}&\textcolor{cyan}{F_{01}}&{\color{red}F_{02}}&{\color{green}F_{03}}&{\color{red}F_{04}}&\textcolor{cyan}{F_{05}}\\
\textcolor{cyan}{F_{10}}&F_{11}&\textcolor{cyan}{F_{12}}&{\color{blue}F_{13}}&\textcolor{orange}{F_{14}}&{\color{blue}F_{15}}\\
{\color{red}F_{20}}&\textcolor{cyan}{F_{21}}&F_{22}&\textcolor{cyan}{F_{23}}&{\color{red}F_{24}}&{\color{purple}F_{25}}\\
{\color{green}F_{30}}&{\color{blue}F_{31}}&\textcolor{cyan}{F_{32}}&F_{33}&\textcolor{cyan}{F_{34}}&{\color{blue}F_{35}}\\
{\color{red}F_{40}}&\textcolor{orange}{F_{41}}&{\color{red}F_{42}}&\textcolor{cyan}{F_{43}}&F_{44}&\textcolor{cyan}{F_{45}}\\
\textcolor{cyan}{F_{50}}&{\color{blue}F_{51}}&{\color{purple}F_{52}}&{\color{blue}F_{53}}&\textcolor{cyan}{F_{54}}&F_{55}\nonumber
\end{pmatrix}
\end{equation}

\subsection{Characterizing KD classical states}

The set of KD classical states is $\rm{KD}_{\mathcal{A},\mathcal{B}}^+={\rm ConvHull}(\mathcal{A}\cup\mathcal{B})$ when $d$ is prime and $U$ is the DFT matrix\cite{Langrenez.2023}. Now we show that  conjecture \ref{conjecture} is correct for $d=p^2$.
\begin{theorem}
\label{theorem}
Suppose that $U$ is the $DFT$ matrix. Then $\rm{KD}_{\mathcal{A},\mathcal{B}}^+=\rm ConvHull(\mathcal{A}\cup \mathcal{B}\cup \mathcal{C})$ when $d=p^2$.
\end{theorem}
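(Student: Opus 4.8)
The plan is to deduce the theorem from Lemma \ref{lemma-3} by a dimension count. By Lemma \ref{lemma-3}, the identity $\mathrm{KD}_{\mathcal{A},\mathcal{B}}^+=\mathrm{ConvHull}(\mathcal{A}\cup\mathcal{B}\cup\mathcal{C})$ is equivalent to $\mathrm{KD}_{\mathcal{A},\mathcal{B}}^r=\mathrm{span}_{\mathbb{R}}(\mathcal{A}\cup\mathcal{B}\cup\mathcal{C})$, so it suffices to establish the latter. The inclusion $\mathrm{span}_{\mathbb{R}}(\mathcal{A}\cup\mathcal{B}\cup\mathcal{C})\subseteq\mathrm{KD}_{\mathcal{A},\mathcal{B}}^r$ has already been observed in the text, and both are finite-dimensional real vector spaces; hence it is enough to show $\dim\mathrm{KD}_{\mathcal{A},\mathcal{B}}^r=\dim\mathrm{span}_{\mathbb{R}}(\mathcal{A}\cup\mathcal{B}\cup\mathcal{C})=3p^2-2p$, the second equality being Lemma \ref{lemma-1}. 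Equality of a subspace and an ambient space of the same finite dimension then finishes the argument.

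To compute $\dim\mathrm{KD}_{\mathcal{A},\mathcal{B}}^r$ I would use the characterization of Lemma \ref{lemma-4}: a self-adjoint $F$ lies in $\mathrm{KD}_{\mathcal{A},\mathcal{B}}^r$ precisely when $F_{i(i+k)}=F_{(i-k)i}$ for all $i,k\in\mathbb{Z}_{p^2}$, together with the Hermiticity relation $F_{i(i+k)}=\overline{F_{i(i+(d-k))}}$ from equation (\ref{F-k-(d-k)}). These are purely linear equalities that collapse the entries of $F$ into the categories described just before the examples, and the real dimension of $\mathrm{KD}_{\mathcal{A},\mathcal{B}}^r$ equals the number of independent real parameters surviving this collapse. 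I would organize the count according to $t=\gcd(k,p^2)\in\{p^2,p,1\}$, exactly as the worked cases $d=5$ and $d=9$ suggest.

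The three contributions are as follows. The diagonal ($k=0$) gives $p^2$ real parameters, since each $F_{ii}$ is real and free. For $k$ with $\gcd(k,p^2)=1$, equation (\ref{F-k}) has length $p^2$, so each such diagonal of $F$ is constant and contributes a single complex parameter; there are $\phi(p^2)=p^2-p$ such $k$, none of them self-conjugate, and the relation (\ref{F-k-(d-k)}) pairs $k$ with $p^2-k$, giving $(p^2-p)/2$ free complex numbers, i.e.\ $p^2-p$ real dimensions. For $k$ with $\gcd(k,p^2)=p$ (the $p-1$ values $k=p,2p,\dots,(p-1)p$), equation (\ref{F-kt}) has length $p$, so each such diagonal splits into $p$ chains and contributes $p$ complex parameters; pairing $k$ with $p^2-k$ again via (\ref{F-k-(d-k)}) yields $p(p-1)$ real dimensions in total. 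Summing, $\dim\mathrm{KD}_{\mathcal{A},\mathcal{B}}^r=p^2+(p^2-p)+(p^2-p)=3p^2-2p$, as required.

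The main thing to get right — and essentially the only obstacle — is the bookkeeping of the categories: one must verify that the equalities of Lemma \ref{lemma-4} and the conjugation relation (\ref{F-k-(d-k)}) together account for all constraints without double counting, that the pairing $k\leftrightarrow p^2-k$ is fixed-point free on the $\gcd=1$ class and (for odd $p$) on the $\gcd=p$ class, and that the one degenerate situation — the self-conjugate middle diagonal $k=p^2/2$, which occurs only for $p=2$ and forces the chain values to be real rather than complex — still yields exactly $p(p-1)$ real dimensions, so that the total $3p^2-2p$ holds uniformly in $p$. Once this count is confirmed, the dimension match with Lemma \ref{lemma-1} closes the proof.
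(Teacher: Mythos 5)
Your proposal is correct and follows essentially the same route as the paper's own proof: reduce via Lemma \ref{lemma-3} to the equality $\mathrm{KD}_{\mathcal{A},\mathcal{B}}^r=\mathrm{span}_{\mathbb{R}}(\mathcal{A}\cup\mathcal{B}\cup\mathcal{C})$, then match dimensions using the entry classification of Lemma \ref{lemma-4}, obtaining $p^2+(p^2-p)+p(p-1)=3p^2-2p$. Your explicit check of the self-conjugate diagonal $k=p^2/2$ in the case $p=2$ is in fact slightly more careful than the paper's argument, which tacitly treats every off-diagonal category as carrying a genuinely complex parameter.
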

\begin{proof}
To prove $\rm{KD}_{\mathcal{A},\mathcal{B}}^+= {\rm ConvHull}(\mathcal{A}\cup \mathcal{B}\cup \mathcal{C})$, we only need to show $\rm{KD}_{\mathcal{A},\mathcal{B}}^r = {\rm span}_{\mathbb{R}}({\mathcal{A}\cup \mathcal{B}\cup \mathcal{C}})$ by lemma \ref{lemma-3}.
Since ${\rm span}_{\mathbb{R}}({\mathcal{A}\cup \mathcal{B}\cup \mathcal{C}})\subseteq \rm{KD}_{\mathcal{A},\mathcal{B}}^r$, it means that we will obtain the desired result if $\dim \rm{KD}_{\mathcal{A},\mathcal{B}}^r= \dim {\rm span}_{\mathbb{R}}({\mathcal{A}\cup \mathcal{B}\cup \mathcal{C}})$. That is, we need to show $\dim {\rm{KD}}_{\mathcal{A},\mathcal{B}}^r=3p^2-2p$.

Suppose $F \in \rm{KD}_{\mathcal{A},\mathcal{B}}^r$. Next, we will apply lemma \ref{lemma-4} to discuss the dimension of $\rm{KD}_{\mathcal{A},\mathcal{B}}^r$.

By lemma \ref{lemma-4}, for off-diagonal entries in $F$, we consider two subcases gcd$(k,d)=1$ and gcd$(k,d)=p$.

(i) gcd$(k,d)=1$.  $k$ cannot take $p$ values of $0$, $p$, $2p$, $\cdots$, $(p-1)p$. Hence $k$ can take $p^2-p$ values. From the previous discussion about gcd$(k,d)=1$, we can see that the length of equation (\ref{F-k}) is $p^2$. By equation (\ref{F-k-(d-k)}), we have $F_{i(i+k)}= \overline {F_{i(i+(p^2-k))}}$. Thus $F_{i(i+k)}$ and $F_{i(i+(p^2-k))}$ are in a common category for all $i\in\mathbb Z_{p^2}$. Therefore, $F_{i(i+k)}$ for all $i\in\mathbb Z_{p^2}$ can be classified into $\frac{p^2-p}{2}$ categories.

(ii) gcd$(k,d)=p$, i.e., $k=p, 2p, \cdots, (p-1)p$. If $k=p$, then
\begin{equation}
\label{p2-F-gcd=p}
F_{(i-p)i}=F_{i(i+p)}=F_{(i+p)(i+2p)}=\cdots=F_{(i+(p-1)p)i}
\end{equation}
by equation (\ref{F-kt}). Notice that $i-p\equiv i+(p-1)p \mod p^2$ in equation (\ref{p2-F-gcd=p}). It implies the length of equation (\ref{p2-F-gcd=p}) is $p$.
Equation (\ref{p2-F-gcd=p}) means that $F_{i(i+p)}=F_{i'(i'+p)}$ when $i\equiv i' \mod p$, where $i'\in\mathbb Z_{p}$. Therefore, according to the value $i'$, $F_{i(i+p)}$ for all $i\in\mathbb Z_{p^2}$ can be classified into $p$ categories and each category has $p$ entries. By equation (\ref{F-k-(d-k)}), we have $F_{i(i+p)}= \overline {F_{i(i+(p^2-p))}}$. Thus $F_{i(i+p)}$ and $F_{i(i+(p^2-p))}$ for all $i\in\mathbb Z_{p^2}$ are in a common category. Thus, there are still $p$ categories and each category has $2p$ entries. Similarly, $F_{i(i+k)}$ and $F_{i(i+(p^2-k))}$ for all $i\in\mathbb Z_{p^2}$ can be classified into $p$ categories and each of which has $2p$ entries. Therefore, $F_{i(i+k)}$ for all $i\in\mathbb Z_{p^2}$ can be classified into $\frac{(p-1)p}{2}$ categories and each category has $2p$ entries.

Off-diagonal entries in $F$  can be classified into $\frac{p^2-p}{2}+\frac{(p-1)p}{2}=p^2-p$ categories. Since the entries in a common category are the same or conjugate, each category is determined by a nonreal value that is determined by two real parameters. Together with $p^2$ independent real diagonal entries, there are $2(p^2-p)+p^2=3p^2-2p$ real parameters. It means that $F$ is determined by $3p^2-2p$ independent real parameters. Therefore, $\dim {\rm{KD}}_{\mathcal{A},\mathcal{B}}^r= 3p^2-2p$.

\end{proof}

Langrenez et al.  \cite{Langrenez.2023} also showed that for prime $d$, a state $\rho\in{\rm ConvHull}(\mathcal{A}\cup \mathcal{B})$ if and only if \begin{equation}
\rho\in \rm{KD}_{\mathcal{A},\mathcal{B}}^+ \nonumber
\end{equation}
 and
\begin{equation}
\label{KD-Q}
Q_{ij}(\rho)+Q_{kl}(\rho)=Q_{il}(\rho)+Q_{kj}(\rho), \forall i,j,k,l\in\mathbb {Z}_d.
\end{equation}
They proved $\rm{KD}_{\mathcal{A},\mathcal{B}}^+={\rm ConvHull}(\mathcal{A}\cup \mathcal{B})$ by equation (\ref{KD-Q}). When $d=p^2$, we also obtain a similar result. It can be used to prove theorem \ref{theorem}. See \ref{Proof of Lemma 5} for details.
%However, for $d=pq$, we do not find a similar result to prove the second statement of theorem \ref{theorem}.

\section{Geometry of Kirkwood-Dirac classical states for  $d=pq$}
In this section, we try to prove $\rm{KD}_{\mathcal{A},\mathcal{B}}^+=$\rm ConvHull$(\mathcal{A}\cup \mathcal{B}\cup \mathcal{C}\cup \mathcal{D})$ for $d=pq$ and $p\neq q$. %The proof of this conclusion is similar to that of $d=p^2$.
Referring to  the proof train of thought of  theorem \ref{theorem},
we  need to prove that $\rm span_{\mathbb{R}}(\mathcal{A}\cup \mathcal{B}\cup \mathcal{C}\cup \mathcal{D})={\rm KD}_{\mathcal{A},\mathcal{B}}^r$.
%Since $\rm span_{\mathbb{R}}(\mathcal{A}\cup \mathcal{B}\cup \mathcal{C}\cup \mathcal{D})\subseteq \rm{KD}_{\mathcal{A},\mathcal{B}}^r$, $\dim \rm span_{\mathbb{R}}(\mathcal{A}\cup \mathcal{B}\cup \mathcal{C}\cup \mathcal{D})=\dim \rm{KD}_{\mathcal{A},\mathcal{B}}^r$ implies $\rm span_{\mathbb{R}}(\mathcal{A}\cup \mathcal{B}\cup \mathcal{C}\cup \mathcal{D})= \rm{KD}_{\mathcal{A},\mathcal{B}}^r$. Below we show that $\dim{\rm span}_{\mathbb{R}}(\mathcal{A}\cup \mathcal{B}\cup \mathcal{C}\cup \mathcal{D})=\dim{\rm KD}_{\mathcal{A},\mathcal{B}}^r$.

\begin{lemma}
\label{lemma-span=Vkdr}
If $d=pq$, then $\dim{\rm span}_{\mathbb{R}}(\mathcal{A}\cup \mathcal{B}\cup \mathcal{C}\cup \mathcal{D})=\dim{\rm KD}_{\mathcal{A},\mathcal{B}}^r=(2p-1)(2q-1)$. Furthermore, $\rm span_{\mathbb{R}}(\mathcal{A}\cup \mathcal{B}\cup \mathcal{C}\cup \mathcal{D})={\rm KD}_{\mathcal{A},\mathcal{B}}^r$.
\end{lemma}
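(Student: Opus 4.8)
The plan is to prove the two stated dimension equalities separately and then deduce the equality of spaces. Since every rank-one projector in $\mathcal{A}\cup\mathcal{B}\cup\mathcal{C}\cup\mathcal{D}$ is KD real, we already have the inclusion $\rm{span}_{\mathbb{R}}(\mathcal{A}\cup\mathcal{B}\cup\mathcal{C}\cup\mathcal{D})\subseteq\rm{KD}_{\mathcal{A},\mathcal{B}}^r$; hence once both sides are shown to have dimension $(2p-1)(2q-1)$, the asserted equality is automatic, a subspace of equal finite dimension being the whole space.

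For the span, I would imitate Lemma~\ref{lemma-1}: define $\Gamma\colon\mathbb{R}^{4pq}\to\rm{span}_{\mathbb{R}}(\mathcal{A}\cup\mathcal{B}\cup\mathcal{C}\cup\mathcal{D})$ by $((\lambda_i),(\mu_j),(\gamma_{ms}),(\delta_{m's'}))\mapsto\sum_i\lambda_i a_i+\sum_j\mu_j b_j+\sum_{m,s}\gamma_{ms}\psi_{ms}+\sum_{m',s'}\delta_{m's'}\varphi_{m's'}$, so that $\dim\rm{span}=4pq-\dim(\rm{Ker}(\Gamma))$. The computational heart is the phase identity $\langle a_i|\psi_{ms}|b_j\rangle=\langle a_i|b_j\rangle$ whenever $m\equiv i\ (\bmod\ p)$ and $s\equiv j\ (\bmod\ q)$ (and $0$ otherwise), together with its $\varphi$-analogue; both follow from equations (\ref{pq-psi})--(\ref{pq-varphi}) and the factorization $\omega_d^{ij}=\omega_q^{ks}\omega_p^{ml}\omega_d^{ms}$ for $i=kp+m$, $j=lq+s$. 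Since a self-adjoint $\Upsilon$ vanishes iff all mixed elements $\langle a_i|\Upsilon|b_j\rangle$ vanish (the $|a_i\rangle\langle b_j|$ being an operator basis) and $\langle a_i|b_j\rangle\neq0$, $\rm{Ker}(\Gamma)$ is exactly the solution set of $\lambda_i+\mu_j+\gamma_{ms}+\delta_{m's'}=0$ for all $i,j$, the indices $m,m',s',s$ being the residues of $i,j$ modulo $p$ and $q$. Passing to residue coordinates via the Chinese remainder theorem ($i\leftrightarrow(m,m')$, $j\leftrightarrow(s',s)$) turns this into the additive relation $\lambda_{(m,m')}+\mu_{(s',s)}+\gamma_{(m,s)}+\delta_{(m',s')}=0$ over $\mathbb{Z}_p\times\mathbb{Z}_q\times\mathbb{Z}_p\times\mathbb{Z}_q$ --- a ``$4$-cycle'' of pairwise functions in which $(m,s')$ and $(m',s)$ never co-occur. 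Fixing reference residues and taking first differences shows every solution has the form $\lambda_{(m,m')}=f(m)+g(m')$, $\gamma_{(m,s)}=-f(m)+G(s)$, $\delta_{(m',s')}=-g(m')+H(s')$, $\mu_{(s',s)}=-H(s')-G(s)$ for single-variable functions $f,H\colon\mathbb{Z}_p\to\mathbb{R}$ and $g,G\colon\mathbb{Z}_q\to\mathbb{R}$, whose only redundancy is the shift $(f,g,G,H)\to(f+a,g-a,G+a,H-a)$; hence $\dim(\rm{Ker}(\Gamma))=2p+2q-1$ and $\dim\rm{span}=4pq-(2p+2q-1)=(2p-1)(2q-1)$.

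For $\rm{KD}_{\mathcal{A},\mathcal{B}}^r$ I would repeat the category count of Theorem~\ref{theorem} through Lemma~\ref{lemma-4}. The $pq$ diagonal entries are real and independent. For each offset $k\neq0$, equation (\ref{F-k}) forces $F_{i(i+k)}$ to be constant on the $\gcd(k,d)$ translation orbits, while self-adjointness conjugate-links offset $k$ with offset $d-k$; for $d=pq$ one has $\gcd(k,d)\in\{1,p,q\}$. Pairing $\{k,d-k\}$ --- the pair being self-conjugate, hence forced real, only in the even case $k=d/2$ --- one checks that the off-diagonal contribution equals $\sum_{k=1}^{d-1}\gcd(k,d)$ in every case. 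With $\sum_{k=0}^{d-1}\gcd(k,d)=\sum_{t\mid d}t\,\phi(d/t)=4pq-2p-2q+1$ this gives $\sum_{k=1}^{d-1}\gcd(k,d)=3pq-2p-2q+1$, so $\dim\rm{KD}_{\mathcal{A},\mathcal{B}}^r=pq+(3pq-2p-2q+1)=(2p-1)(2q-1)$. Together with the inclusion noted above, the two spaces coincide.

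I expect the main obstacle to be the kernel computation --- solving the $4$-cycle additive system and pinning down its exact redundancy so that $\dim(\rm{Ker}(\Gamma))$ equals $2p+2q-1$ rather than $2p+2q$. A secondary technical point is that the $\rm{KD}^r$ count formally splits on the parity of $d$: when $p=2$ or $q=2$ the middle offset $k=d/2$ produces forced-real categories, and one must confirm that this case still yields the same total $\sum_{k=1}^{d-1}\gcd(k,d)$.
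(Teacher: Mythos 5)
Your proposal is correct and follows essentially the same route as the paper's proof in \ref{Proof of Lemma-span=Vkdr}: the same map $\Gamma$ and Chinese-remainder reduction to the additive system $\lambda_{(m,m')}+\mu_{(s',s)}+\gamma_{(m,s)}+\eta_{(m',s')}=0$ with kernel dimension $2(p+q)-1$, and the same Lemma~\ref{lemma-4}-based category count for $\dim{\rm KD}_{\mathcal{A},\mathcal{B}}^r$, concluding via the inclusion ${\rm span}_{\mathbb{R}}(\mathcal{A}\cup\mathcal{B}\cup\mathcal{C}\cup\mathcal{D})\subseteq{\rm KD}_{\mathcal{A},\mathcal{B}}^r$. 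The only notable difference is presentational: your symmetric parametrization of the kernel and your formula $\dim{\rm KD}_{\mathcal{A},\mathcal{B}}^r=pq+\sum_{k=1}^{d-1}\gcd(k,d)$ unify the even and odd cases that the paper treats separately ($2<p<q$ versus $d=2p$), which is a tidy simplification.
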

The proof is given in \ref{Proof of Lemma-span=Vkdr}. This result completely characterizes KD real operators.

Next, we want to get an equivalent proposition similar to lemma \ref{lemma-3}, i.e., $ \rm span_{\mathbb{R}}({\mathcal{A}\cup \mathcal{B}\cup \mathcal{C}\cup \mathcal{D}})={KD}_{\mathcal{A},\mathcal{B}}^r $ if and only if $\rm{KD}_{\mathcal{A},\mathcal{B}}^+= ConvHull(\mathcal{A}\cup \mathcal{B}\cup \mathcal{C}\cup \mathcal{D})$ for $d=pq$. If the equivalent proposition holds, the desired result will be obtained. By the proof of lemma \ref{lemma-3}, we know that the proof of this equivalence proposition needs the condition
$\rm{KD}_{\mathcal{A},\mathcal{B}}^+\cap  span_{\mathbb{R}}(\mathcal{A}\cup \mathcal{B}\cup \mathcal{C}\cup \mathcal{D})= \rm ConvHull(\mathcal{A}\cup \mathcal{B}\cup \mathcal{C}\cup \mathcal{D})$. %is a sufficient condition for this equivalent proposition.
%If $\rm{KD}_{\mathcal{A},\mathcal{B}}^+\cap  span_{\mathbb{R}}(\mathcal{A}\cup \mathcal{B}\cup \mathcal{C}\cup \mathcal{D})= \rm ConvHull(\mathcal{A}\cup \mathcal{B}\cup \mathcal{C}\cup \mathcal{D})$, then we can prove the equivalence proposition, and then  we can get the desired result, i.e., $\rm{KD}_{\mathcal{A},\mathcal{B}}^+=$\rm ConvHull$(\mathcal{A}\cup \mathcal{B}\cup \mathcal{C}\cup \mathcal{D})$ based on DFT.
Unfortunately, we cannot prove $\rm{KD}_{\mathcal{A},\mathcal{B}}^+\cap  span_{\mathbb{R}}(\mathcal{A}\cup \mathcal{B}\cup \mathcal{C}\cup \mathcal{D})= \rm ConvHull(\mathcal{A}\cup \mathcal{B}\cup \mathcal{C}\cup \mathcal{D})$. Therefore, it is unclear whether  conjecture \ref{conjecture} is true or not for $d=pq$. However, through further exploration, we have the following result when $d=pq$.

\begin{theorem}
\label{theorem-pq}
If $d=pq$, then $\dim {\rm span}_{\mathbb{R}}(\mathcal{X}\cup \mathcal{Y}\cup \mathcal{Z})=3pq-p-q$ and $\rm{KD}_{\mathcal{A},\mathcal{B}}^+\cap  span_{\mathbb{R}}(\mathcal{X}\cup \mathcal{Y}\cup \mathcal{Z})= ConvHull(\mathcal{X}\cup \mathcal{Y}\cup \mathcal{Z})$, where $\mathcal{X}, \mathcal{Y},
\mathcal{Z}$  are three different sets randomly selected from  $\mathcal{A}, \mathcal{B}, \mathcal{C}$ and $\mathcal{D}$.
\end{theorem}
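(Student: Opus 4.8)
The plan is to prove both statements for each of the four possible choices of $\{\mathcal{X},\mathcal{Y},\mathcal{Z}\}$, halving the work by the two evident symmetries of the construction. Relabelling the factorisation $d=pq$ via $p\leftrightarrow q$ interchanges $\mathcal{C}$ and $\mathcal{D}$ while fixing $\mathcal{A},\mathcal{B}$, and the duality exchanging the two bases $\boldsymbol{a}\leftrightarrow\boldsymbol{b}$ interchanges $\mathcal{A}\leftrightarrow\mathcal{B}$ and $\mathcal{C}\leftrightarrow\mathcal{D}$ (note $\psi_{ms}$ has support sizes $(q,p)$ and $\varphi_{m's'}$ has $(p,q)$), in each case preserving $\mathrm{KD}_{\mathcal{A},\mathcal{B}}^+$. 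These identify the four choices into two orbits, represented by the ``two-basis'' case $\{\mathcal{A},\mathcal{B},\mathcal{C}\}$ and the ``one-basis'' case $\{\mathcal{A},\mathcal{C},\mathcal{D}\}$. In both, the inclusion $\mathrm{ConvHull}(\mathcal{X}\cup\mathcal{Y}\cup\mathcal{Z})\subseteq\mathrm{KD}_{\mathcal{A},\mathcal{B}}^+\cap\mathrm{span}_{\mathbb{R}}(\mathcal{X}\cup\mathcal{Y}\cup\mathcal{Z})$ is immediate (each generator lies in both sets, which are respectively convex and linear), so the content is the reverse inclusion and the dimension count, which I would run in parallel.

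For $\{\mathcal{A},\mathcal{B},\mathcal{C}\}$ I would reproduce the mechanism of Lemma \ref{lemma-1} with $q$ replacing one copy of $p$. The one identity to check is $\langle a_i|\psi_{ms}|b_j\rangle=\langle a_i|b_j\rangle$ for $i\equiv m\bmod p$, $j\equiv s\bmod q$, which follows by expanding $|\psi_{ms}\rangle$ in both bases (equation (\ref{pq-psi})) and verifying $\omega_d^{ij}=\omega_q^{ks}\omega_p^{ml}\omega_d^{ms}$ for $i=kp+m$, $j=lq+s$. A kernel vector of the coefficient map then satisfies $\langle a_i|\Upsilon|b_j\rangle=\langle a_i|b_j\rangle(\lambda_i+\mu_j+\gamma_{ms})=0$; varying $k$ and $l$ forces $\lambda_i$ to depend only on $i\bmod p$, $\mu_j$ only on $j\bmod q$, and $\gamma_{ms}=-(\lambda_m+\mu_s)$, so the kernel has dimension $p+q$ and the span has dimension $3pq-p-q$. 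For the reverse inclusion, $Q_{ij}(\rho)\ge0$ reads $\lambda_{kp+m}+\mu_{lq+s}+\gamma_{ms}\ge0$; putting $\lambda_m=\min_k\lambda_{kp+m}$, $\mu_s=\min_l\mu_{lq+s}$ and using the block identities $\sum_s\psi_{ms}=\sum_k a_{kp+m}$ and $\sum_m\psi_{ms}=\sum_l b_{lq+s}$ (proved as in Lemma \ref{lemma-1}) rewrites $\rho$ with all coefficients non-negative, hence as a convex combination since $\mathrm{Tr}\,\rho=1$.

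The substantive case is $\{\mathcal{A},\mathcal{C},\mathcal{D}\}$, where no second basis is available and I would instead read off matrix entries in the basis $\boldsymbol{a}$. Writing a general element as $\rho=\sum_i\lambda_i a_i+\sum_{m,s}\gamma_{ms}\psi_{ms}+\sum_{m',s'}\delta_{m's'}\varphi_{m's'}$, the projector $\psi_{ms}$ only touches entries $(i_1,i_2)$ with $i_1\equiv i_2\equiv m\bmod p$ and $\varphi_{m's'}$ only those with $i_1\equiv i_2\equiv m'\bmod q$. By the Chinese remainder theorem an off-diagonal entry with $i_1\equiv i_2\bmod p$ but $i_1\not\equiv i_2\bmod q$ is controlled by the $\psi$'s alone, and one with $i_1\equiv i_2\bmod q$ but not mod $p$ by the $\varphi$'s alone. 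For a kernel vector the first family yields $\sum_s\gamma_{ms}\omega_q^{s(k_1-k_2)}=0$ for all $k_1\neq k_2$ in $\mathbb{Z}_q$, forcing $\gamma_{ms}$ to be independent of $s$; symmetrically $\delta_{m's'}$ is independent of $s'$; and the diagonal fixes $\lambda_i=-(\gamma_{i\bmod p}+\delta_{i\bmod q})$. Hence $\dim\mathrm{Ker}=p+q$, giving $3pq-p-q$ once more, and the kernel is spanned by the relations $\sum_s\psi_{ms}=\sum_k a_{kp+m}$ ($m\in\mathbb{Z}_p$) and $\sum_{s'}\varphi_{m's'}=\sum_k a_{kq+m'}$ ($m'\in\mathbb{Z}_q$).

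For the reverse inclusion here I would exploit precisely these $p+q$ relations: adding $t_m\bigl(\sum_s\psi_{ms}-\sum_k a_{kp+m}\bigr)$ and $t'_{m'}\bigl(\sum_{s'}\varphi_{m's'}-\sum_k a_{kq+m'}\bigr)$ leaves $\rho$ unchanged while sending $\gamma_{ms}\mapsto\gamma_{ms}+t_m$, $\delta_{m's'}\mapsto\delta_{m's'}+t'_{m'}$ and $\lambda_i\mapsto\lambda_i-t_{i\bmod p}-t'_{i\bmod q}$. Using $\langle a_i|\varphi_{m's'}|b_j\rangle=\langle a_i|b_j\rangle$ (checked as for $\psi$), the condition $Q_{ij}(\rho)\ge0$ becomes $\lambda_{(m,m')}+\gamma_{ms}+\delta_{m's'}\ge0$ for all $m,m',s,s'$, where $i\leftrightarrow(m,m')$ is the CRT bijection. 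Setting $t_m=-\min_s\gamma_{ms}$ and $t'_{m'}=-\min_{s'}\delta_{m's'}$, the shifted $\gamma$- and $\delta$-coefficients are non-negative by construction, while the shifted $\lambda$-coefficient equals $\lambda_{(m,m')}+\min_s\gamma_{ms}+\min_{s'}\delta_{m's'}=\min_{s,s'}\bigl(\lambda_{(m,m')}+\gamma_{ms}+\delta_{m's'}\bigr)\ge0$. Since $\mathrm{Tr}\,\rho=1$ this exhibits $\rho\in\mathrm{ConvHull}(\mathcal{A}\cup\mathcal{C}\cup\mathcal{D})$. The crux, and the only place the argument could fail, is this splitting of the double minimum, which holds exactly because, for fixed $i$, the pair $(j\bmod q,\,j\bmod p)$ runs independently over $\mathbb{Z}_q\times\mathbb{Z}_p$ by CRT; this decoupling is the $\{\mathcal{A},\mathcal{C},\mathcal{D}\}$ analogue of the easy independence of $k$ and $l$ in the $\{\mathcal{A},\mathcal{B},\mathcal{C}\}$ case, and it is what makes the feasibility problem for $(t_m,t'_{m'})$ solvable componentwise.
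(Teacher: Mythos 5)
Your proposal is correct, and its core mechanism for the containment $\mathrm{KD}_{\mathcal{A},\mathcal{B}}^+\cap\mathrm{span}_{\mathbb{R}}(\cdot)\subseteq\mathrm{ConvHull}(\cdot)$ is the same as the paper's: sandwich with $\langle a_i|\cdot|b_j\rangle$ to turn KD positivity into linear inequalities on the coefficients, subtract off the per-block minima, absorb them via the block-sum relations such as $\sum_s\psi_{ms}=\sum_k a_{kp+m}$, and use the Chinese-remainder independence of the two residues to split the double minimum. The differences are organizational but genuine. First, the paper proves only the representative $\{\mathcal{B},\mathcal{C},\mathcal{D}\}$ and declares the other three cases ``similar,'' whereas $\{\mathcal{A},\mathcal{B},\mathcal{C}\}$ and $\{\mathcal{A},\mathcal{B},\mathcal{D}\}$ are not in the symmetry orbit of $\{\mathcal{B},\mathcal{C},\mathcal{D}\}$ and do require the parallel (if easier) computation you actually carry out; your explicit reduction to two orbit representatives and treatment of both is therefore more complete than the paper's written proof. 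Second, for the one-basis case your kernel computation reads off matrix entries in the $\boldsymbol{a}$-basis and uses the CRT disjointness of the off-diagonal supports of the $\psi_{ms}$ and $\varphi_{m's'}$ to decouple the conditions (a vanishing discrete Fourier transform at all nonzero frequencies forces $\gamma_{ms}$ constant in $s$), identifying the kernel directly as the span of the $p+q$ block-sum relations; the paper instead sandwiches with $\langle a_i|\cdot|b_j\rangle$ and performs successive elimination among the equations $\mu_{ss'}+\gamma_{ms}+\eta_{m's'}=0$. Your route makes the geometric origin of the kernel more transparent. Third, you phrase the coefficient adjustment as a shift by kernel elements with parameters $t_m=-\min_s\gamma_{ms}$, which avoids the paper's slightly loose ``without loss of generality'' relabeling of which index attains the minimum; the computations are otherwise identical, and both hinge on the same CRT decoupling that you correctly single out as the crux.
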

The proof of theorem \ref{theorem-pq} is similar to lemma \ref{lemma-1} and is given in \ref{Proof of theorem-pq}.

\section{Conclusions and discussions}
We have studied the KD classicality of quantum states based on DFT matrix. We  conjecture that $\rm{KD}_{\mathcal{A},\mathcal{B}}^+=\rm{ConvHull}(\rm {pure}({\rm {KD}_{\mathcal{A},\mathcal{B}}^+}))$ for general $d$ and show that this conjecture is true for the dimension $d=p^2$. That is, $\rm{KD}_{\mathcal{A},\mathcal{B}}^+={\rm ConvHull}(\mathcal{A}\cup\mathcal{B}\cup\mathcal{C})$ for $d=p^2$. This result is a generalization of the conclusion in reference \cite{Langrenez.2023}. For $d=pq$, although we do not completely characterize the set $\rm{KD}_{\mathcal{A},\mathcal{B}}^+$, some relevant results are given. We hope our results can lead to more findings in this field. It is worth exploring further whether this conjecture is true or not for general $d$.

A thorough understanding of KD classical states is a prerequisite for studying KD nonclassical states. Recently, KD nonclassicality measure based on the two quantum modification terms has been shown \cite{He-J.2024}. The measure refines the quantification of KD nonclassicality. In addition, there are several techniques that can be used to measure any KD distribution (regardless of it being described in terms of DFT)\cite{Quasiprobability.heat.2020,heat.flow.2024}.  It will be interesting to use these  techniques to measure the KD nonclassicality  based on DFT.

\ack
This work is supported by NSFC (Grant Nos. 11971151, 11901163, 62171264), the Fundamental Research Funds for the Universities of Henan Province (NSFRF220402).

\vspace{3mm}
\noindent\textbf{Data availability statement}

\vspace{2mm}

\noindent All data that support the findings of this study are included within the article (and any supplementary files)

\appendix
\section{The proof of Lemma \ref{lemma-3}}
\label{Proof of Lemma 3}
\begin{proof}
Let us prove sufficiency by contraposition. Suppose that ${\rm span}_{\mathbb{R}}({\mathcal{A}\cup \mathcal{B}\cup \mathcal{C}})\subset \rm{KD}_{\mathcal{A},\mathcal{B}}^r$ since ${\rm span}_{\mathbb{R}}(\mathcal{A}\cup \mathcal{B}\cup \mathcal{C})\subseteq
\rm{KD}_{\mathcal{A},\mathcal{B}}^r$. Thus
\begin{equation}
\rm{KD}_{\mathcal{A},\mathcal{B}}^r={\rm span}_{\mathbb{R}}({\mathcal{A}\cup \mathcal{B}\cup \mathcal{C}})\oplus W, \nonumber
\end{equation}
where $W$ is the nontrivial orthogonal complement space of span$_{\mathbb{R}}({\mathcal{A}\cup \mathcal{B}\cup \mathcal{C}})$ in $\rm{KD}_{\mathcal{A},\mathcal{B}}^r$. If $F\in W$, this implies $\langle a_{i}|F|a_{i}\rangle=0=\langle b_{j}|F|b_{j}\rangle$. It follows Tr$F = 0$. Let $F\in W\backslash\{0\}$. Let us consider
\begin{equation}
\label{rho=rho*+xF}
\rho(x)=\rho_{*}+xF,\ \forall x\in\mathbb{R},\rho_{*}=\frac{1}{d}\mathbb{I}_d\in {\rm ConvHull}({\mathcal{A}\cup \mathcal{B}\cup \mathcal{C}}).
\end{equation}
Hence, ${\rm Tr}\rho(x) = {\rm Tr}\rho_{*}+x{\rm Tr}F = 1$ for all $x\in\mathbb{R}$. Moreover, for any state $|\psi\rangle\in\mathcal{H}$, one has
\begin{equation}
\langle\psi|\rho(x)|\psi\rangle=\frac{1}{d}+x\langle\psi|F|\psi\rangle=\frac{1}{d}+xf_{i}\geq\frac{1}{d}-|x|f_{max},\ \forall x\in\mathbb{R}, \nonumber
\end{equation}
where $f_{i}$ are the eigenvalues of $F$ and $f_{max}=$max$\{|f_{i}|\mid i\in\mathbb Z_d\}>0$. If $x\in[-\frac{1}{df_{max}},\frac{1}{df_{max}}]$, then $\rho(x)$ is a density operator.

Since $F\in W\backslash\{0\}$, it follows $F\in \rm{KD}_{\mathcal{A},\mathcal{B}}^r$. Then $Q_{ij}(F)$ are real for all $i,j\in\mathbb Z_d$ and $\rho(x)\in \rm{KD}_{\mathcal{A},\mathcal{B}}^r$. One has
\begin{equation}
Q_{ij}(\rho(x))=\langle a_i|b_j\rangle\langle b_j|\rho(x)|a_i\rangle
=\frac{1}{d}+xQ_{ij}(F)\geq\frac{1}{d}-|x|\textrm{max}_{i,j}|Q_{ij}(F)|,\ \forall x\in\mathbb{R}. \nonumber
\end{equation}
If $x\in[-\frac{1}{d\textrm{max}_{i,j}|Q_{ij}(F)|},\frac{1}{d\textrm{max}_{i,j}|Q_{ij}(F)|}]$, then $Q_{ij}(\rho(x))\geq0$ for all $i,j\in\mathbb Z_d$.
Let \begin{equation}
x_+={\rm min}\{\frac{1}{df_{max}},\frac{1}{d\textrm{max}_{i,j}|Q_{ij}(F)|}\}>0.\nonumber
\end{equation}  Thus $\rho(x)\in\rm{KD}_{\mathcal{A},\mathcal{B}}^+$ when $x\in[-x_+,x_+]$. That $\rho(x)\notin {\rm span}_{\mathbb{R}}({\mathcal{A}\cup \mathcal{B}\cup \mathcal{C}})$ for all $x\neq0$ by equation (\ref{rho=rho*+xF}). It is obvious that ${\rm ConvHull}({\mathcal{A}\cup \mathcal{B}\cup \mathcal{C}})\subseteq $span$_{\mathbb{R}}({\mathcal{A}\cup \mathcal{B}\cup \mathcal{C}})$, then $\rho(x)\notin {\rm ConvHull}({\mathcal{A}\cup \mathcal{B}\cup \mathcal{C}})$ for all $x\neq0$. It follows $\rm{KD}_{\mathcal{A},\mathcal{B}}^+\neq{\rm ConvHull}({\mathcal{A}\cup \mathcal{B}\cup \mathcal{C}})$, which is contradictory to $\rm{KD}_{\mathcal{A},\mathcal{B}}^+= {\rm ConvHull}({\mathcal{A}\cup \mathcal{B}\cup \mathcal{C}})$. This completes the proof.
\end{proof}

\section{The proof of Lemma \ref{lemma-4}}
\label{The proof of lemma 4}
\begin{proof}
The entries of DFT matrix $U$ are $U_{ij}=\langle a_i|b_j\rangle=\frac{1}{\sqrt{d}}\omega_{d}^{ij}$, where $i,j\in\mathbb{Z}_{d}$. For simplicity, the notation of $\omega_{d}^{ij}$ is written as $\omega^{ij}$ in this proof.

 By the definition of $\rm{KD}_{\mathcal{A},\mathcal{B}}^r$, we can know that   $F\in \rm{KD}_{\mathcal{A},\mathcal{B}}^r$ if and only if ${\rm Im}(Q_{ij}(F))=0$ for all $i,j\in\mathbb Z_d$, where ${\rm Im}(Q_{ij}(F))$ is the imaginary part of $Q_{ij}(F)$. First of all,
\begin{equation}
\label{Qij(F)}
Q_{ij}(F)=\sum_{k=0}^{d-1}\langle b_j|a_i\rangle\langle a_i|F|a_k\rangle\langle a_k|b_j\rangle =\frac{1}{d}\sum_{k=0}^{d-1}\omega^{(k-i)j}F_{ik},\  \forall \ i,j \in\mathbb Z_d.
\end{equation}
Let $k'=k-i\in \mathbb Z_d$, then equation (\ref{Qij(F)}) can be written as
\begin{equation}
\label{k‘-Qij(F)}
Q_{ij}(F)=\frac{1}{d}\sum_{k'=0}^{d-1}\omega^{k'j}F_{i(i+k')}=\frac{1}{d}\sum_{k'=1}^{d-1}\omega^{k'j}F_{i(i+k')}+\frac{1}{d}F_{ii},\  \forall \ i,j \in\mathbb Z_d.
\end{equation}
Since $F$ is self-adjoint, it follows
\begin{eqnarray}
\label{k-i Im}
{\rm Im}(Q_{ij}(F))
=\frac{1}{2d\sqrt{-1}}\sum_{k'=1}^{d-1}(\omega^{k'j}F_{i(i+k')}-\overline{\omega}^{k'j}F_{(i+k')i}), \  \forall \ i,j \in\mathbb Z_d.
\end{eqnarray}

We divide the proof into two cases, odd $d$ and even $d$.

Case 1. Odd $d$. If $k'\in\{1,\ldots,\frac{d-1}{2}\}$, then $d-k'\in\{\frac{d+1}{2},\ldots,d-1\}$. In equation (\ref{k-i Im}),
\begin{eqnarray}
&&(\omega^{k'j}F_{i(i+k')}-\overline{\omega}^{k'j}F_{(i+k')i})+(\omega^{(d-k')j}F_{i(i+(d-k'))}-\overline{\omega}^{(d-k')j}F_{(i+(d-k'))i}) \nonumber\\
&=&(\omega^{k'j}F_{i(i+k')}-\overline{\omega}^{k'j}F_{(i+k')i})+(\overline\omega^{k'j}F_{i(i-k')}-\omega^{k'j}F_{(i-k')i}) \nonumber\\
&=&\omega^{k'j}(F_{i(i+k')}-F_{(i-k')i})+\overline\omega^{k'j}(F_{i(i-k')}-F_{(i+k')i}).\nonumber
\end{eqnarray}
Let $z_{k'}:=F_{i(i+k')}-F_{(i-k')i}$. Then equation (\ref{k-i Im}) can be rewritten as
\begin{eqnarray}
\label{eq-ImQ-odd}
{\rm Im}(Q_{ij}(F))=\frac{1}{2d\sqrt{-1}}(\sum_{k'=1}^{\frac{d-1}{2}}\omega^{k'j}z_{k'}+\overline{\omega}^{k'j}(-\overline z_{k'})),\  \forall \ i,j \in\mathbb Z_d.
\end{eqnarray}
Recall $F\in \rm{KD}_{\mathcal{A},\mathcal{B}}^r$ if and only if ${\rm Im}(Q_{ij}(F))=0$ for all $i,j \in\mathbb Z_d$. If $F\in \rm{KD}_{\mathcal{A},\mathcal{B}}^r$, then
\begin{eqnarray}
\label{equation{k'j}}
\sum_{k'=1}^{\frac{d-1}{2}}\omega^{k'j}z_{k'}+\overline{\omega}^{k'j}(-\overline z_{k'})=0,\  \forall \ j\in\mathbb Z_{d}.
\end{eqnarray}
This equation (\ref{equation{k'j}}) can be written as
\begin{equation}
Mz=0,\nonumber
\end{equation}
where $z=(z_{1},\ldots,z_{\frac{d-1}{2}},-\overline z_{\frac{d-1}{2}},\ldots,-\overline z_{1})^{T}$, $``T"$ denotes the transpose of a matrix, and
\begin{eqnarray}
M&=&
\begin{pmatrix}
{\omega} & \omega^{2} & \cdots & \omega^{\frac{d-1}{2}} & \overline \omega^{\frac{d-1}{2}} & \cdots & \overline \omega^{2} & \overline \omega^{1}\\
\omega^{2} & \omega^{4} & \cdots & \omega^{d-1} & \overline \omega^{d-1} & \cdots & \overline \omega^{4} & \overline \omega^{2}\\
\vdots & \vdots & \vdots & \vdots & \vdots & \vdots & \vdots & \vdots\\
\omega^{d-1} & \omega^{2(d-1)} & \cdots & \omega^{\frac{(d-1)^2}{2}} & \overline \omega^{\frac{(d-1)^2}{2}} & \cdots & \overline \omega^{2(d-1)} & \overline \omega^{d-1}\\
\end{pmatrix} \nonumber\\
&=&\begin{pmatrix}
\omega & \omega^{2} & \cdots & \omega^{\frac{d-1}{2}} & \omega^{\frac{d+1}{2}} & \cdots & \omega^{d-2} & \omega^{d-1}\\
\omega^{2} & \omega^{4} & \cdots & \omega^{d-1} &  \omega^{d+1} & \cdots & \omega^{2(d-2)} & \omega^{2(d-1)}\\
\vdots & \vdots & \vdots & \vdots & \vdots & \vdots & \vdots & \vdots\\
\omega^{d-1} & \omega^{2(d-1)} & \cdots & \omega^{\frac{(d-1)^2}{2}} & \omega^{\frac{(d-1)(d+1)}{2}} & \cdots & \omega^{(d-1)(d-2)} & \omega^{(d-1)^2}\\
\end{pmatrix}.\nonumber
\end{eqnarray}
Since the entries in the first row of $M$ are all distinct, according to the property of Vandermonde matrix, this matrix $M$ is invertible. It follows that $z_{k'}=0$ for all $k'\in\{1,\ldots,\frac{d-1}{2}\}$, i.e.,
\begin{equation}
\label{F}
F_{i(i+k')}=F_{(i-k')i},\forall \ i\in\mathbb Z_{d}, k'\in\mathbb Z^*_{\frac{d-1}{2}},
\end{equation}
where $\mathbb Z^*_{\frac{d-1}{2}}=\mathbb Z_{\frac{d-1}{2}}\backslash \{0\}$. If $k'\in\{\frac{d+1}{2},\ldots,d-1\}$, then $d-k'\in\{1,\ldots,\frac{d-1}{2}\}$. Substitute $d-k'$ into equation (\ref{F}), and then $F_{i(i-k')}=F_{(i+k')i}$ for all $ i\in\mathbb Z_{d}$ and $k'\in\{\frac{d+1}{2},\ldots,d-1\}$. Since $F$ is self-adjoint, it follows $F_{i(i+k')}=\overline{F_{(i+k')i}}$ and $F_{(i-k')i}=\overline{F_{i(i-k')}}$. It follows
\begin{equation}
F_{i(i+k')}=F_{(i-k')i},\ \forall\ i\in\mathbb Z_{d},\ k'\in\{\frac{d+1}{2},\ldots,d-1\}.\nonumber
\end{equation}
It is obvious that $F_{ii}=F_{ii}$ when $k'=0$. From the above, one obtains
\begin{equation}
F_{i(i+k')}=F_{(i-k')i},\ \forall\ i, k'\in\mathbb Z_d.\nonumber
\end{equation}

Conversely, if $F_{i(i+k')}=F_{(i-k')i}$ for all $i, k'\in\mathbb Z_d$, then $z_{k'}=0$ and $-\overline z_{k'}=0$. That ${\rm Im}(Q_{ij}(F))=0$ for all $i, j\in\mathbb Z_d$ can be obtained from equation (\ref{eq-ImQ-odd}). Thus $F \in \rm{KD}_{\mathcal{A},\mathcal{B}}^r$.

Case 2. Even $d$. Its proof is similar to that of Case 1. If $k'\in\{1,\ldots,\frac{d}{2}-1\}$, then $d-k'\in\{\frac{d}{2}+1,\ldots,d-1\}$. In equation (\ref{k-i Im}),
\begin{eqnarray}
&&(\omega^{k'j}F_{i(i+k')}-\overline{\omega}^{k'j}F_{(i+k')i})+(\omega^{(d-k')j}F_{i(i+(d-k'))}-\overline{\omega}^{(d-k')j}F_{(i+(d-k'))i}) \nonumber\\
&=&(\omega^{k'j}F_{i(i+k')}-\overline{\omega}^{k'j}F_{(i+k')i})+(\overline\omega^{k'j}F_{i(i-k')}-\omega^{k'j}F_{(i-k')i}) \nonumber\\
&=&\omega^{k'j}(F_{i(i+k')}-F_{(i-k')i})+\overline\omega^{k'j}(F_{i(i-k')}-F_{(i+k')i}).\nonumber
\end{eqnarray}
Let $z_{k'}:=F_{i(i+k')}-F_{(i-k')i}$. Then equation (\ref{k-i Im}) can be written as
\begin{equation}
\label{eq-ImQ-even}
{\rm Im}(Q_{ij}(F))=\frac{1}{2d\sqrt{-1}}(\sum_{k'=1}^{\frac{d}{2}-1}\omega^{k'j}z_{k'}+\overline{\omega}^{k'j}(-\overline z_{k'})+\omega^{\frac{d}{2}j}z_{\frac{d}{2}}),\ \forall \ i,j \in\mathbb Z_d.\nonumber
\end{equation}
Recall F $\in \rm{KD}_{\mathcal{A},\mathcal{B}}^r$ if and only if ${\rm Im}(Q_{ij}(F))=0$ for all $i,j \in\mathbb Z_d$. If $\in \rm{KD}_{\mathcal{A},\mathcal{B}}^r$, then
\begin{eqnarray}
\label{even-mz=0}
\sum_{k'=1}^{\frac{d-1}{2}}(\omega^{k'j}z_{k'}+\overline{\omega}^{k'j}(-\overline z_{k'})+\omega^{\frac{d}{2}j}z_{\frac{d}{2}})=0,\ \forall \ j \in\mathbb Z_{d}.
\end{eqnarray}
This equation (\ref{even-mz=0}) can be written as
\begin{equation}
Mz=0,
\nonumber
\end{equation}
where $z=(z_{1},\ldots,z_{\frac{d}{2}-1},z_{\frac{d}{2}},-\overline z_{\frac{d}{2}-1},\ldots,-\overline z_{1})^T$, $``T"$ is denoted the transpose of a matrix, and
\begin{eqnarray}
M&=&\begin{pmatrix}
\omega & \cdots & \omega^{\frac{d}{2}-1} & \omega_{\frac{d}{2}} & \overline \omega^{\frac{d}{2}-1} & \cdots  & \overline \omega\\
\omega^{2}  & \cdots & \omega^{2(\frac{d}{2}-1)} & \omega^{2\frac{d}{2}} & \overline \omega^{2(\frac{d}{2}-1)} & \cdots  & \overline \omega^{2}\\
\vdots &  \vdots & \vdots & \vdots & \vdots & \vdots &  \vdots\\
\omega^{d-1}  & \cdots & \omega^{(d-1)(\frac{d}{2}-1)} & \omega^{(d-1)(\frac{d}{2})} & \overline \omega^{(d-1)(\frac{d}{2}-1)} & \cdots  & \overline \omega^{d-1}\\
\end{pmatrix} \nonumber\\
&=&\begin{pmatrix}
\omega  & \cdots & \omega^{\frac{d}{2}-1} & \omega_{\frac{d}{2}} & \omega^{\frac{d}{2}+1} & \cdots & \omega^{d-1}\\
\omega^{2}  & \cdots & \omega^{2(\frac{d}{2}-1)} & \omega^{2(\frac{d}{2})} & \omega^{2(\frac{d}{2}+1)} & \cdots  & \omega^{2(d-1)}\\
\vdots & \vdots & \vdots & \vdots & \vdots & \vdots  & \vdots\\
\omega^{d-1}  & \cdots & \omega^{(d-1)(\frac{d}{2}-1)} & \omega^{(d-1)(\frac{d}{2})}& \omega^{(d-1)(\frac{d}{2}+1)} & \cdots  & \omega^{(d-1)^2}\\
\end{pmatrix}\nonumber
\end{eqnarray}
Since the entries in the first row are all different, according to the property of
Vandermonde matrix, this matrix $M$ is invertible. It follows that $z_{k'}=0$ for all $k'\in\{1,\ldots,\frac{d}{2}-1\}$, i.e.,
\begin{equation}
\label{eq-F-ik}
F_{i(i+k')}=F_{(i-k')i},\ \forall\ i\in\mathbb Z_d,\ k'\in\mathbb Z^*_{\frac{d}{2}-1}.
\end{equation}
If $k'\in\{\frac{d}{2}+1,\ldots,d-1\}$, then $d-k'\in\{1,\ldots,\frac{d}{2}-1\}$.  Substituting $d-k'$ into equation (\ref{eq-F-ik}), one can obtain $F_{i(i-k')}=F_{(i+k')i}$ for all $i\in\mathbb Z_d$ and $k'\in\{\frac{d}{2}+1,\ldots,d-1\}$. Since F is self-adjoint, it follows $F_{i(i+k')}=\overline{F_{(i+k')i}}$ and $F_{(i-k')i}=\overline{F_{i(i-k')}}$. Therefore,
\begin{equation}
F_{i(i+k')}=F_{(i-k')i},\ \forall\ i\in\mathbb Z_d,\ k'\in\{\frac{d}{2}+1,\ldots,d-1\}. \nonumber
\end{equation}
It is obvious that $F_{ii}=F_{ii}$ when $k'=0$. From the above, one has
\begin{equation}
F_{i(i+k')}=F_{(i-k')i},\ \forall\ i, k'\in\mathbb Z_d.\nonumber
\end{equation}
Conversely, if $F_{i(i+k')}=F_{(i-k')i}$ for all $i, k'\in\mathbb Z_d$, then $z_{k'}=0$ and $-\overline z_{k'}=0$. That ${\rm Im}(Q_{ij}(F))=0$ for all $i, j\in\mathbb Z_d$ can be obtained from equation (\ref{eq-ImQ-even}). Thus $F \in \rm{KD}_{\mathcal{A},\mathcal{B}}^r$.
\end{proof}

\section{Another method to prove Theorem \ref{theorem}}
\label{Proof of Lemma 5}
We first give a lemma, which can also prove Theorem \ref{theorem}.
\begin{lemma}
\label{lemma-5}
If $d=p^2$, then
\begin{equation}
\rho\in \rm ConvHull(\mathcal{A}\cup \mathcal{B}\cup \mathcal{C}) \nonumber
\end{equation}
if and only if the following three statements hold,
\begin{flalign*}
\ \ \ \ \ & (i)\ \rho\in\rm{KD}_{\mathcal{A},\mathcal{B}}^+, &\\
\ \ \ \ \ &(ii)\ Q_{ij}(\rho)+Q_{kl}(\rho)=Q_{il}(\rho)+Q_{kj}(\rho),\forall\ i \equiv k \mod p,\  j,l\in\mathbb Z_d,\\
\ \ \ \ \ &(iii)\ Q_{ij}(\rho)+Q_{kl}(\rho)=Q_{il}(\rho)+Q_{kj}(\rho),\forall\ j \equiv l \mod p,\  i,k\in\mathbb Z_d.&
\end{flalign*}
\end{lemma}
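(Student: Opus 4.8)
The plan is to prove both directions by reducing everything to lemma \ref{lemma-1}, which already identifies ${\rm KD}_{\mathcal{A},\mathcal{B}}^+\cap{\rm span}_{\mathbb{R}}(\mathcal{A}\cup\mathcal{B}\cup\mathcal{C})$ with ${\rm ConvHull}(\mathcal{A}\cup\mathcal{B}\cup\mathcal{C})$. The entire argument rests on one explicit computation of the KD distributions of the generators. Writing $i=kp+m$ and $j=lp+s$ with $m,s\in\mathbb{Z}_p$, I would first record that $Q_{ij}(a_r)=\frac{1}{d}\delta_{ir}$ depends only on the row index and $Q_{ij}(b_r)=\frac{1}{d}\delta_{jr}$ only on the column index, and then establish the key identity
\[
Q_{ij}(\psi_{ms})=\frac{1}{d}\,\delta_{(i\bmod p),m}\,\delta_{(j\bmod p),s}.
\]
This is obtained by inserting the two representations of $|\psi_{ms}\rangle$ from equation (\ref{p2-psi}) into $Q_{ij}(\psi_{ms})=\langle b_j|a_i\rangle\langle a_i|\psi_{ms}\rangle\langle\psi_{ms}|b_j\rangle$ and using $\omega_d^{p}=\omega_p$ together with $\omega_d^{\,p^2}=1$; it is precisely the arithmetic of $d=p^2$ that makes all cross phases cancel, leaving a pure product of residue Kronecker deltas.

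For necessity, since each $Q_{ij}$ is linear in $\rho$ and identities (ii) and (iii) are linear in the family $\{Q_{ij}(\rho)\}$, it suffices to verify them on the extreme points $a_r$, $b_r$, $\psi_{ms}$. Using the three formulas above this is immediate: for $a_r$ and $b_r$ both sides collapse to the same pair of deltas, and for $\psi_{ms}$ the product structure $\delta_{(i\bmod p),m}\,\delta_{(j\bmod p),s}$ makes the two sides coincide exactly because the hypothesis $i\equiv k\ (\mathrm{mod}\ p)$ (respectively $j\equiv l$) equates the relevant residues. Condition (i) is inherited from ${\rm ConvHull}(\mathcal{A}\cup\mathcal{B}\cup\mathcal{C})\subseteq{\rm KD}_{\mathcal{A},\mathcal{B}}^+$, already noted in lemma \ref{lemma-1}.

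For sufficiency, I would show that (ii) and (iii) force the additive normal form $Q_{ij}(\rho)=\alpha_i+\beta_j+\gamma_{(i\bmod p)(j\bmod p)}$. Rewriting (ii) as ``$Q_{ij}-Q_{kj}$ is independent of $j$ whenever $i\equiv k\ (\mathrm{mod}\ p)$'' lets me peel off a row term $\alpha_i$ by comparing row $i$ with its residue representative; rewriting (iii) symmetrically peels off a column term $\beta_j$; the remainder depends only on $(i\bmod p,\,j\bmod p)$ and is named $\gamma$. Comparing with the three displayed formulas, this says exactly that $(Q_{ij}(\rho))$ lies in the span of $\{Q(a_r),Q(b_r),Q(\psi_{ms})\}$. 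Because the KD map $\rho\mapsto(Q_{ij}(\rho))$ is injective for the invertible DFT matrix — one recovers $\langle a_i|\rho|b_j\rangle=Q_{ij}(\rho)/\langle b_j|a_i\rangle$ and hence $\rho=\sum_{i,j}|a_i\rangle\langle a_i|\rho|b_j\rangle\langle b_j|$ — the operator $\sum_i d\alpha_i\,a_i+\sum_j d\beta_j\,b_j+\sum_{m,s}d\gamma_{ms}\,\psi_{ms}$ has the same KD distribution as $\rho$ and therefore equals $\rho$. Thus $\rho\in{\rm span}_{\mathbb{R}}(\mathcal{A}\cup\mathcal{B}\cup\mathcal{C})$, and combining with (i) and lemma \ref{lemma-1} gives $\rho\in{\rm ConvHull}(\mathcal{A}\cup\mathcal{B}\cup\mathcal{C})$.

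The main obstacle is the clean evaluation of $Q_{ij}(\psi_{ms})$: everything downstream, both the verification of (ii) and (iii) on generators and the identification of the additive normal form with the image of ${\rm span}_{\mathbb{R}}(\mathcal{A}\cup\mathcal{B}\cup\mathcal{C})$ under the KD map, hinges on the phases cancelling to the bare product of residue deltas, and this is the single place where the factorization $d=p\cdot p$ is used essentially. A secondary delicate point is justifying that the row term from (ii) and the column term from (iii) are globally compatible on the base $p\times p$ block, which is exactly why the two conditions must be invoked together rather than separately.
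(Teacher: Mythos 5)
Your proof is correct, and the load-bearing computation is the same as the paper's: the identity $Q_{ij}(\psi_{ms})=\frac{1}{d}\delta_{(i\bmod p),m}\delta_{(j\bmod p),s}$ (equivalently, $Q_{ij}(\rho)=\frac{1}{d}(\lambda_i+\mu_j+\gamma_{ms})$ for elements of the span) together with the observation that (ii) and (iii) say exactly that $Q_{ij}-Q_{kj}$ is independent of $j$ within a row-residue class and symmetrically for columns. The necessity direction is essentially identical to the paper's. Where you genuinely diverge is in finishing the sufficiency direction: the paper reconstructs explicit \emph{nonnegative} coefficients directly, setting $\lambda_i=d(Q_{is}(\rho)-Q_{ms}(\rho))$, $\mu_j=d(Q_{mj}(\rho)-Q_{ms}(\rho))$, $\gamma_{ms}=dQ_{ms}(\rho)$ with $(m,s)$ chosen as minimizers over each residue class, then verifies ${\rm Tr}\,\rho=1$ by a telescoping sum, and concludes ConvHull membership in one step; you instead establish only that $\rho\in{\rm span}_{\mathbb{R}}(\mathcal{A}\cup\mathcal{B}\cup\mathcal{C})$ and delegate the positivity rearrangement to lemma \ref{lemma-1}, which avoids re-deriving the minimum-subtraction argument already contained there. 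Your version also makes explicit a step the paper leaves implicit: both proofs ultimately identify $\rho$ with an operator built from the reconstructed coefficients, and this identification needs the injectivity of $\rho\mapsto(Q_{ij}(\rho))$, which you justify cleanly via $\langle a_i|\rho|b_j\rangle=Q_{ij}(\rho)/\langle b_j|a_i\rangle$ and the nonvanishing of the DFT entries, whereas the paper simply reuses the symbol $\rho$ for the constructed state. The one point to make sure you spell out is that your $\alpha_i,\beta_j,\gamma_{ms}$ are real (they are, since condition (i) forces all $Q_{ij}(\rho)$ to be real), so that $\rho$ lands in the real span required by lemma \ref{lemma-1}.
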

\begin{proof}
Let us first consider sufficiency. Since $\rho\in\rm{KD}_{\mathcal{A},\mathcal{B}}^+$, it follows $Q_{ij}(\rho)\geq0$ for all $i,j\in\mathbb Z_d$. For $\forall \ m,s\in\mathbb Z_p$, suppose
${\rm min}_{i,j\in\mathbb Z_d}\{ \frac {Q_{ij}(\rho)}{{|\langle a_i|b_j\rangle|}^2}|{i\equiv m\mod p,j\equiv s\mod p} \} =: \frac{Q_{ms}(\rho)}{{|\langle a_m|b_s\rangle|}^2}$. That ${|\langle a_i|b_j\rangle|}^2=\frac{1}{d}$ for all $i,j\in\mathbb Z_d$ since $U$ is the DFT matrix.
Define \begin{equation}
\lambda_i =d(Q_{is}(\rho)-Q_{ms}(\rho)), \mu_j = d(Q_{mj}(\rho)-Q_{ms}(\rho)), \gamma_{ms} = dQ_{ms}(\rho),\nonumber
\end{equation}
where $i \equiv m \mod p$, $j \equiv s \mod p$. That $Q_{ij}(\rho)-Q_{kj}(\rho)=Q_{il}(\rho)-Q_{kl}(\rho)$ holds for all $i \equiv k\mod p, j,l\in\mathbb Z_d$ by the second statement. It implies
$\lambda_i$ is well defined independent of $s$. Similarly, by the third statement, $\mu_j$ is well defined independent of $m$.
Construct a state
\begin{equation}
\rho = \sum_{i=0}^{d-1}\lambda_ia_i+\sum_{j=0}^{d-1}\mu_jb_j+\sum_{m,s=0}^{p-1}\gamma_{ms}\psi_{ms}. \nonumber
\end{equation}
Notice that $\lambda_i, \mu_j, \gamma_{ms}\geq 0$ for all $i, j\in\mathbb Z_d$ and $m, s\in\mathbb Z_p$.

Next, we will show Tr$\rho=1$.
\begin{eqnarray}
\label{Tr-rho}
{\rm Tr}\rho &=& \sum_{i=0}^{d-1}\lambda_i + \sum_{j=0}^{d-1}\mu_j + \sum_{m,s=0}^{p-1}\gamma_{ms} \nonumber\\
&=& \frac{1}{d}\sum_{i,j=0}^{d-1}\lambda_i + \frac{1}{d}\sum_{i,j=0}^{d-1}\mu_j + d\sum_{m,s=0}^{p-1}Q_{ms}(\rho) \nonumber\\
&=& \sum_{m,k',s,l'=0}^{p-1}(Q_{(m+k'p)s}(\rho)-Q_{ms}(\rho)) + \sum_{m,k',s,l'=0}^{p-1}(Q_{m(s+l'p)}(\rho)-Q_{ms}(\rho))\nonumber\\
&\;&+ \sum_{m,k',s,l'=0}^{p-1}Q_{ms}(\rho)  \nonumber\\
&=& \sum_{m,k',s,l'=0}^{p-1}(Q_{(m+k'p)s}(\rho) + Q_{m(s+l'p)}(\rho) - Q_{ms}(\rho)) \nonumber\\
&=& \sum_{m,k',s,l'=0}^{p-1}Q_{(m+k'p)(s+l'p)}(\rho) = 1.
\end{eqnarray}
%The second equality  in equation (\ref{Tr-rho}) holds since $\sum_{i=0}^{d-1}{|\langle a_i|b_j\rangle|}^2=\sum_{j=0}^{d-1}{|\langle a_i|b_j\rangle|}^2=1$ and ${|\langle a_i|b_j\rangle|}^2=\frac{1}{d}$ for all $i, j\in\mathbb Z_d$.
The third equality  in equation (\ref{Tr-rho}) holds since $i=m+k'p, j=s+l'p$ for all $m,k',s,l'\in\mathbb Z_p$. The fourth equality  in equation (\ref{Tr-rho}) holds since $Q_{ij}(\rho)+Q_{kl}(\rho)=Q_{il}(\rho)+Q_{kj}(\rho)$, for $\forall\ i \equiv k \mod p, \forall\ j \equiv l\mod p$.
Thus $\rho\in {\rm ConvHull}(\mathcal{A}\cup \mathcal{B}\cup \mathcal{C})$.

Secondly, we prove the necessity. Since $\rho\in {\rm ConvHull}(\mathcal{A}\cup \mathcal{B}\cup \mathcal{C})$, one has
\begin{equation}
\rho = \sum_{i=0}^{d-1}\lambda_ia_i+\sum_{j=0}^{d-1}\mu_jb_j+\sum_{m,s=0}^{p-1}\gamma_{ms}\psi_{ms}, \nonumber
\end{equation}
where $\sum_{i=0}^{d-1}\lambda_i + \sum_{j=0}^{d-1}\mu_j + \sum_{m,s=0}^{p-1}\gamma_{ms} = 1$ and $\lambda_i, \mu_j, \gamma_{ms}\geq 0$ for all $i, j\in\mathbb Z_d$, $m, s\in\mathbb Z_p$.
Therefore,
\begin{equation}
Q_{ij}(\rho) = \frac{1}{d}(\lambda_i+\mu_j+\gamma_{ms})\geq0, \forall\ i, j\in\mathbb Z_d, \nonumber
\end{equation}
where $i \equiv m \mod p$, $j \equiv s \mod p$. Then $\rho\in\rm{KD}_{\mathcal{A},\mathcal{B}}^+$.

Next, we will prove the second statement. For $i \equiv k\mod p$ and $j,l\in\mathbb Z_d$, suppose $i, k \equiv m\mod p$, $j \equiv s \mod p $, $l \equiv s' \mod p $ with $m, s, s'\in\mathbb Z_p$. It follows
\begin{equation}
\begin{split}
Q_{ij}(\rho) &= \frac{1}{d}(\lambda_i+\mu_j+\gamma_{ms}), \nonumber \\
Q_{kl}(\rho)&= \frac{1}{d}(\lambda_k+\mu_l+\gamma_{ms'}), \nonumber \\
Q_{il}(\rho)&= \frac{1}{d}(\lambda_i+\mu_l+\gamma_{ms'}), \nonumber \\
Q_{kj}(\rho)&=\frac{1}{d}(\lambda_k+\mu_j+\gamma_{ms}). \nonumber
\end{split}
\end{equation}
Thus
\begin{equation}
Q_{ij}(\rho)+Q_{kl}(\rho)=\frac{1}{d}(\lambda_i+\mu_j+\gamma_{ms}+\lambda_k+\mu_l+\gamma_{ms'})=Q_{il}(\rho)+Q_{kj}(\rho). \nonumber
\end{equation}
Similarly, we can prove that the third statement still holds.
That completes the proof.
\end{proof}

Next, we will apply lemma \ref{lemma-5} to prove  theorem \ref{theorem}. Since ${\rm ConvHull}(\mathcal{A}\cup \mathcal{B}\cup \mathcal{C})\subseteq \rm{KD}_{\mathcal{A},\mathcal{B}}^+$, we only need to prove that $\rm{KD}_{\mathcal{A},\mathcal{B}}^{+}\subseteq{\rm ConvHull}(\mathcal{A}\cup \mathcal{B}\cup \mathcal{C})$. Therefore, we need to prove the second statement and the third statement. To prove the third statement, we need to use a corollary from lemma \ref{lemma-4}.

\begin{corollary}
\label{corollary}
For any dimension $d$, a self-adjoint operator $G\in \rm{KD}_{\mathcal{A},\mathcal{B}}^r$ if and only if
\begin{equation}
\label{eq-G}
G_{j(j+k)}=G_{(j-k)j}, \forall \ j,k \in\mathbb Z_d,
\end{equation}
where $G_{kj}=\langle b_k|G|b_j\rangle$.
\end{corollary}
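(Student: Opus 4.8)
The plan is to deduce Corollary \ref{corollary} from Lemma \ref{lemma-4} by exploiting the symmetry of $\mathrm{KD}_{\mathcal{A},\mathcal{B}}^r$ under interchanging the two bases $\boldsymbol{a}$ and $\boldsymbol{b}$, so that no second Vandermonde computation is required. First I would record the elementary observation that, for a self-adjoint $G$, the KD value computed with the two bases swapped is the complex conjugate of the original one. Writing $\hat{Q}_{ji}(G)=\langle a_i|b_j\rangle\langle b_j|G|a_i\rangle$ for the KD distribution relative to the ordered pair $(\boldsymbol{b},\boldsymbol{a})$ and using $\langle b_j|G|a_i\rangle=\overline{\langle a_i|G|b_j\rangle}$ (self-adjointness), one gets $\hat{Q}_{ji}(G)=\overline{\langle b_j|a_i\rangle}\;\overline{\langle a_i|G|b_j\rangle}=\overline{Q_{ij}(G)}$. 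Since a complex number is real exactly when its conjugate is, the reality of all $Q_{ij}(G)$ is equivalent to the reality of all $\hat{Q}_{ji}(G)$; hence $\mathrm{KD}_{\mathcal{A},\mathcal{B}}^r=\mathrm{KD}_{\mathcal{B},\mathcal{A}}^r$.

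Next I would apply Lemma \ref{lemma-4} to the ordered pair $(\boldsymbol{b},\boldsymbol{a})$, in which $\boldsymbol{b}$ now plays the role of the first basis and $\boldsymbol{a}$ that of the second. The associated transition matrix has entries $\langle b_j|a_k\rangle=\frac{1}{\sqrt{d}}\omega_d^{-jk}$, that is, the inverse (conjugate) DFT matrix. Transcribed to this pair, the conclusion of Lemma \ref{lemma-4} reads precisely $G_{j(j+k)}=G_{(j-k)j}$ with $G_{jk}=\langle b_j|G|b_k\rangle$, which, once combined with $\mathrm{KD}_{\mathcal{B},\mathcal{A}}^r=\mathrm{KD}_{\mathcal{A},\mathcal{B}}^r$, is exactly the statement of the corollary.

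The point that needs checking---and the only genuine obstacle---is that Lemma \ref{lemma-4} remains valid when the transition matrix is the conjugate DFT $\frac{1}{\sqrt{d}}\omega_d^{-jk}$ rather than the DFT itself. Inspecting the proof of Lemma \ref{lemma-4}, its sole structural input is that the matrix $M$ assembled from the relevant phases is an invertible Vandermonde-type matrix, and this requires only that the generating root be a primitive $d$-th root of unity. Because $\omega_d^{-1}$ is again primitive, replacing $\omega_d$ by $\omega_d^{-1}$ merely conjugates $M$ and preserves its invertibility, so the argument transfers verbatim. If one prefers to avoid invoking the swapped version of Lemma \ref{lemma-4}, an equivalent route is to expand $Q_{ij}(G)=\frac{1}{d}\sum_{k'}\omega_d^{ik'}G_{(j+k')j}$ directly in the $\boldsymbol{b}$-basis and rerun the imaginary-part analysis of Lemma \ref{lemma-4} with $i$, rather than $j$, as the free phase variable; the bookkeeping is identical up to the interchange of the two indices, and the same Vandermonde argument forces $G_{(j+k')j}=G_{j(j-k')}$, which is equivalent to $G_{j(j+k)}=G_{(j-k)j}$ after conjugation.
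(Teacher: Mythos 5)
Your proposal is correct. The paper does not actually write out a proof of Corollary \ref{corollary}: it only remarks that the argument is ``similar to that of lemma \ref{lemma-4}'', i.e.\ one expands $Q_{ij}(G)=\frac{1}{d}\sum_{k'}\omega_d^{ik'}G_{(j+k')j}$ in the $\boldsymbol{b}$-basis and reruns the imaginary-part/Vandermonde analysis with $i$ as the free phase index --- which is exactly the fallback route you sketch at the end, and your bookkeeping there (obtaining $G_{(j+k')j}=G_{j(j-k')}$ and conjugating) is right. Your primary route is a genuinely different and cleaner packaging: the observation that $\hat{Q}_{ji}(G)=\overline{Q_{ij}(G)}$ for self-adjoint $G$, hence $\mathrm{KD}_{\mathcal{A},\mathcal{B}}^r=\mathrm{KD}_{\mathcal{B},\mathcal{A}}^r$, reduces the corollary to Lemma \ref{lemma-4} applied to the ordered pair $(\boldsymbol{b},\boldsymbol{a})$, and you correctly isolate the one point that must be re-checked, namely that the transition matrix is now the conjugate DFT and that the Vandermonde matrix $M$ built from $\omega_d^{-1}$ (the entrywise conjugate $\overline{M}$, with first row still consisting of distinct roots of unity) remains invertible because $\omega_d^{-1}$ is again a primitive $d$-th root of unity. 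What your route buys is that no computation is repeated and the symmetry $\mathrm{KD}_{\mathcal{A},\mathcal{B}}^r=\mathrm{KD}_{\mathcal{B},\mathcal{A}}^r$ is made explicit; what the paper's (implicit) route buys is that it never needs to discuss the conjugate DFT at all. Either argument is complete and correct.
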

The proof of this corollary is similar to that of lemma \ref{lemma-4}. Here we will not repeat it. Next, we give another method to prove  theorem \ref{theorem}.

\begin{proof}
Let us first show the second statement of lemma \ref{lemma-5} holds.
Since $\rho\in\rm{KD}_{\mathcal{A},\mathcal{B}}^+$, obviously, $\rho\in \rm{KD}_{\mathcal{A},\mathcal{B}}^r$. From equation (\ref{k‘-Qij(F)}), one has
\begin{equation}
\label{Qij-rho}
Q_{ij}(\rho)=\frac{1}{d}\sum_{k'=1}^{d-1}{\rm Re}(\omega^{k'j}\rho_{i(i+k')})+\frac{1}{d}\rho_{ii}, \  \forall \ i,j \in\mathbb Z_d.
\end{equation}
It follows
\begin{equation}
\label{eq-Qij-kl}
Q_{ij}(\rho)+Q_{kl}(\rho)=\frac{1}{d}\sum_{k'=1}^{d-1}{\rm Re}(\omega^{k'j}\rho_{i(i+k')})+\frac{1}{d}\rho_{ii}+\frac{1}{d}\sum_{k'=1}^{d-1}{\rm Re}(\omega^{k'l}\rho_{k(k+k')})+\frac{1}{d}\rho_{kk},\nonumber
\end{equation}
\begin{equation}
\label{eq-Qkj-il}
Q_{kj}(\rho)+Q_{il}(\rho)=\frac{1}{d}\sum_{k'=1}^{d-1}{\rm Re}(\omega^{k'j}\rho_{k(k+k')})+\frac{1}{d}\rho_{kk}+\frac{1}{d}\sum_{k'=1}^{d-1}{\rm Re}(\omega^{k'l}\rho_{i(i+k')})+\frac{1}{d}\rho_{ii}.\nonumber
\end{equation}

It is easy to verify that if $\rho_{i(i+k')}=\rho_{k(k+k')}$ for all $i,k\in\mathbb Z_{d}$, $i\equiv k \mod p$ and $k'\in\mathbb Z^*_{d}$, then the second statement holds.
Next, we will prove $\rho_{i(i+k')}=\rho_{k(k+k')}$ by two cases, gcd$(k',d)=p$ and gcd$(k',d)=1$.

Case 1. gcd$(k',d)=p$. Since $\rho\in \rm{KD}_{\mathcal{A},\mathcal{B}}^r$, by equation (\ref{F-kt}), one has
\begin{equation}
\label{rho-i-i+gk'}
\rho_{i(i+k')}=\rho_{(i+gk')(i+(g+1)k')}, \ \forall\  i\in\mathbb Z_{d},\ g\in\mathbb Z_{p}.
\end{equation}
Let $k:=i+gk'$. It follows $i\equiv k \mod p$. Then equation (\ref{rho-i-i+gk'}) can be written as $\rho_{i(i+k')}=\rho_{k(k+k')}$.

Case 2. gcd$(k',d)=1$. Since $\rho\in \rm{KD}_{\mathcal{A},\mathcal{B}}^r$, by equation (\ref{F-k}), one has
\begin{equation}
\label{rho-i-i+hk'}
\rho_{i(i+k')}=\rho_{(i+hk')(i+(h+1)k')}, \ \forall\  i,h\in\mathbb Z_{d}.
\end{equation}
Similarly, let $k:=i+hk'$. Then equation (\ref{rho-i-i+hk'}) can be written as $\rho_{i(i+k')}=\rho_{k(k+k')}$ for $\forall\ i, k\in\mathbb Z_{d}$. Specially, for $i\equiv k \mod p$ this equation still holds.
The desired result is obtained.

Secondly, let us show the third statement of lemma \ref{lemma-5} holds. Since $\rho\in \rm{KD}_{\mathcal{A},\mathcal{B}}^r$, it follows \begin{equation}
Q_{ij}(\rho)=\sum_{k=0}^{d-1}{\rm Re}(\langle b_j|a_i\rangle\langle a_i|b_k\rangle \langle b_k|\rho|b_j\rangle) =\frac{1}{d}\sum_{k=0}^{d-1}{\rm Re}(\omega^{i(k-j)}\rho_{kj}),\  \forall \ i,j \in\mathbb Z_d.\nonumber
\end{equation}
Let $k'=k-j\in \mathbb Z_d$, one can obtain the following equation that is similar to equation (\ref{Qij-rho}),
\begin{equation}
\label{k‘-Qij(G)}
Q_{ij}(\rho)=\frac{1}{d}\sum_{k'=1}^{d-1}{\rm Re}(\omega^{ik'}\rho_{(j+k')j})+\frac{1}{d}\rho_{jj}, \  \forall \ i,j \in\mathbb Z_d. \nonumber
\end{equation}
It follows
\begin{equation}
\label{G-eq-Qij-kl}
Q_{ij}(\rho)+Q_{kl}(\rho)=\frac{1}{d}\sum_{k'=1}^{d-1}{\rm Re}(\omega^{ik'}\rho_{(j+k')j})+\frac{1}{d}\rho_{jj}+\frac{1}{d}\sum_{k'=1}^{d-1}{\rm Re}(\omega^{kk'}\rho_{(l+k')l})+\frac{1}{d}\rho_{ll},\nonumber
\end{equation}
\begin{equation}
\label{G-eq-Qkj-il}
Q_{kj}(\rho)+Q_{il}(\rho)=\frac{1}{d}\sum_{k'=1}^{d-1}{\rm Re}(\omega^{kk'}\rho_{(j+k')j})+\frac{1}{d}\rho_{jj}+\frac{1}{d}\sum_{k'=1}^{d-1}{\rm Re}(\omega^{ik'}\rho_{(l+k')l})+\frac{1}{d}\rho_{ll}.\nonumber
\end{equation}
Similar to the previous processes, we can show $\rho_{(j+k')j}=\rho_{(l+k')l}$ for all $i,k\in\mathbb Z_{d}$, $j\equiv l\mod p$ and $k'\in\mathbb Z^*_{d}$ by equation (\ref{eq-G}).
Therefore, the third statement holds.
Then KD$^{+}_{\mathcal{A},\mathcal{B}}\subseteq{\rm ConvHull}(\mathcal{A}\cup \mathcal{B}\cup \mathcal{C})$ by lemma \ref{lemma-5}. The proof has been completed.
\end{proof}

\section{The proof of Lemma \ref{lemma-span=Vkdr}}
\label{Proof of Lemma-span=Vkdr}
\begin{proof}
Let us first show the dimension of $\rm span_{\mathbb{R}}(\mathcal{A}\cup \mathcal{B}\cup \mathcal{C}\cup \mathcal{D})$. The  proof  of  $\dim \rm span_{\mathbb{R}}(\mathcal{A}\cup \mathcal{B}\cup \mathcal{C}\cup \mathcal{D})$ is similar to the proof of the first statement of lemma \ref{lemma-1}. Define a linear map $\Gamma:\mathbb{R}^{4pq}\rightarrow \rm{\rm span}_{\mathbb{R}}(\mathcal{A}\cup \mathcal{B}\cup \mathcal{C}\cup \mathcal{D})$, for each $((\lambda_i),(\mu_j),(\gamma_{ms}),(\eta_{m's'}))\in\mathbb{R}^{4pq}$
\begin{equation}
\Gamma((\lambda_i),(\mu_j),(\gamma_{ms}),(\eta_{m's'}))=\sum_{i=0}^{pq-1}\lambda_ia_i+\sum_{j=0}^{pq-1}\mu_jb_j+\sum_{{m\in\mathbb Z_p}\atop{s\in\mathbb Z_q}}\gamma_{ms}\psi_{ms}+\sum_{{m'\in\mathbb Z_q}\atop{s'\in\mathbb Z_p}}\eta_{m's'}\varphi_{m's'}.\nonumber
\end{equation}
From the rank theorem, we have $\dim\rm{\rm span}_{\mathbb{R}}(\mathcal{A}\cup \mathcal{B}\cup \mathcal{C}\cup \mathcal{D})=$ $4pq-\dim(\rm{Ker} (\Gamma))$.

Suppose
\begin{equation}
\Upsilon=\sum_{i=0}^{pq-1}\lambda_ia_i+\sum_{j=0}^{pq-1}\mu_jb_j+\sum_{{m\in\mathbb Z_p}\atop{s\in\mathbb Z_q}}\gamma_{ms}\psi_{ms}+\sum_{{m'\in\mathbb Z_q}\atop{s'\in\mathbb Z_p}}\eta_{m's'}\varphi_{m's'}=0, \nonumber
\end{equation}
where $\lambda_i,\mu_j,\gamma_{ms},\eta_{m's'}\in \mathbb{R}$. We will calculate $\dim({\rm Ker}(\Gamma))$ below.
Since $\Upsilon=0$, one has
\begin{equation}
\langle a_i|\Upsilon|b_j\rangle=\langle a_i|b_j\rangle(\lambda_i+\mu_j+\gamma_{ms}+\eta_{m's'}) = 0,\ i,j\in\mathbb Z_{pq},\nonumber
\end{equation}
where
\begin{eqnarray}
\left\{
\begin{aligned}
i \equiv m \mod p \\
i \equiv m' \mod q
\end{aligned}
\right.,
\ \ \
\left\{
\begin{aligned}
j \equiv s \mod q \\
j \equiv s' \mod p
\end{aligned}
\right..\nonumber
\end{eqnarray}
According to Chinese remainder theorem, $i$ is uniquely determined in $\mathbb Z_{pq}$ by $m$ and $m'$, and $j$ is uniquely determined in $\mathbb Z_{pq}$ by $s$ and $s'$. Since $\langle a_i|b_j\rangle=\frac{1}{\sqrt{d}}\omega_d^{ij} \neq 0$, one has $\lambda_i+\mu_j+\gamma_{ms}+\eta_{m's'} = 0$, i.e.,
\begin{equation}
\label{eq-mm'}
\lambda_{mm'}+\mu_{ss'}+\gamma_{ms}+\eta_{m's'} = 0,  \ \textrm{for}\  \forall\ m,s'\in\mathbb Z_p,\ m',s\in\mathbb Z_q.
\end{equation}
%In order to get the relation between $\mu_{ss'}$ and $\eta_{m's'}$,
It follows
\begin{eqnarray}
\lambda_{mm'}+\mu_{s0}+\gamma_{ms} + \eta_{m'0} = \lambda_{mm'}+\mu_{s{s}'}+\gamma_{ms}+\eta_{m'{s}'},\nonumber
\end{eqnarray}
for all $m,s'\in\mathbb Z_p$, $m',s\in\mathbb Z_q$. Thus
\begin{eqnarray}
\label{s1'=0}
\eta_{m'{s}'}-\eta_{m'0} =\mu_{s0}-\mu_{s{s}'},\ \textrm{for}\ \forall\ {s}'\in\mathbb Z_p,\ m',s\in\mathbb Z_q.
\end{eqnarray}
It follows
\begin{eqnarray}
\label{eta=mu+eta-m'0}
\eta_{m'{s}'} =\mu_{s0}-\mu_{s{s}'}+\eta_{m'0},\ \textrm{for}\ \forall\ {s}'\in\mathbb Z_p,\ m',s\in\mathbb Z_q.
\end{eqnarray}
In addition, equation (\ref{s1'=0}) implies
\begin{eqnarray}
\mu_{ s0}-\mu_{s{s}'}=\mu_{00}-\mu_{0{s}'},\ \textrm{for}\ \forall\ {s}'\in\mathbb Z_p,\  s\in\mathbb Z_q. \nonumber
\end{eqnarray}
Therefore,
\begin{eqnarray}
\label{mu-s-s'}
\mu_{ s{s}'}=\mu_{s0}-\mu_{00}+\mu_{0{s}'},\ \textrm{for}\ \forall\ {s}'\in\mathbb Z_p,\ s\in\mathbb Z_q.
\end{eqnarray}
Substituting equation (\ref{mu-s-s'}) into equation (\ref{eta=mu+eta-m'0}), we have
\begin{eqnarray}
\label{eta=mu00-mu+eta}
\eta_{m'{s}'} =\mu_{00}-\mu_{0{s}'}+\eta_{m'0},\ \textrm{for}\ \forall\ {s}'\in\mathbb Z_p,\ m'\in\mathbb Z_q.
\end{eqnarray}
Equations (\ref{mu-s-s'}) and (\ref{eta=mu00-mu+eta}) mean that variables $\mu_{s{s}'}$ and $\eta_{m'{s}'}$ can be expressed by independent variables  $\mu_{s0}$, $\mu_{0{s}'}$ and $\eta_{m'0}$, for all $s\in\mathbb Z_q$,  ${s}'\in\mathbb Z_p$,  $m'\in\mathbb Z_q$.

Similarly, equation (\ref{eq-mm'}) implies
\begin{equation}
\lambda_{mm'}+\mu_{s{s}'}+\gamma_{ms}+\eta_{m'{s}'} =\lambda_{mm'}+\mu_{0{s}'}+\gamma_{m0}+\eta_{m'{s}'},\nonumber
\end{equation}
for all $m, s'\in\mathbb Z_p$, $m',s\in\mathbb Z_q$. Then
\begin{eqnarray}
\label{s2=0}
\mu_{0{s}'}-\mu_{s{s}'} =\gamma_{ms}-\gamma_{m0},\ \textrm{for}\ \forall\ m, s'\in\mathbb Z_p,\  s\in\mathbb Z_q.\nonumber
\end{eqnarray}
Thus
\begin{eqnarray}
\label{gam=mu-mu+gam-m0}
\gamma_{m s}=\mu_{0{s}'}-\mu_{s{s}'}+\gamma_{m0},\ \textrm{for}\ \forall\ m, s'\in\mathbb Z_p,\  s\in\mathbb Z_q.
\end{eqnarray}
Substituting equation (\ref{mu-s-s'}) into equation (\ref{gam=mu-mu+gam-m0}), we have
\begin{eqnarray}
\label{gam=00-s0-m0}
\gamma_{ms}=\gamma_{m0}-\mu_{s0}+\mu_{00},\ \textrm{for}\ \forall\ m\in\mathbb Z_p,\ s\in\mathbb Z_q.
\end{eqnarray}
By equation (\ref{eq-mm'}), one can obtain
\begin{eqnarray}
\label{lam-mm'=mu+gam+eta}
\lambda_{mm'}=-(\mu_{ss'}+\gamma_{ms}+\eta_{m's'}),\ \textrm{for}\ \forall\ m,s'\in\mathbb Z_p, m',s\in\mathbb Z_q.
\end{eqnarray}
Substitute equation (\ref{mu-s-s'}), (\ref{eta=mu00-mu+eta}) and (\ref{gam=00-s0-m0}) into equation (\ref{lam-mm'=mu+gam+eta}), then
\begin{equation}
\label{lam-mm'}
\lambda_{mm'}=-(\mu_{00}+\eta_{m'0}+\gamma_{m0}),\ \textrm{for}\ \forall\ m\in\mathbb Z_p, \ m'\in\mathbb Z_q.
\end{equation}
Equations (\ref{mu-s-s'}), (\ref{eta=mu00-mu+eta}), (\ref{gam=00-s0-m0}) and (\ref{lam-mm'}) mean that all variables $\lambda_{mm'}$, $\mu_{ss'}$, $\eta_{m's'}$ and $\gamma_{ms}$ can be expressed by  variables $\mu_{s0}$, $\mu_{0{s}'}$, $\eta_{m'0}$ and $\gamma_{m0}$ for all $s\in\mathbb Z_q$, ${s}'\in\mathbb Z_p$, $m'\in\mathbb Z_q$ and $m\in\mathbb Z_p$. For $\mu_{s0}$ and $\mu_{0{s}'}$, the same variable $\mu_{00}$ can be obtained when $s=0$ and $s'=0$. Therefore, there are $q+p+q+p-1=2(p+q)-1$ independent variables and $\dim({\rm Ker}(\Gamma))=2(p+q)-1$. Then $\dim{\rm \rm span}_{\mathbb{R}}(\mathcal{A}\cup \mathcal{B}\cup \mathcal{C}\cup \mathcal{D}) = 4pq-(2(p+q)-1)=(2p-1)(2q-1)$.

Next, we calculate the dimension of $\rm{KD}_{\mathcal{A},\mathcal{B}}^r$. Similar to the proof of theorem \ref{theorem},  we will apply lemma \ref{lemma-4} to discuss the dimension of $\rm{KD}_{\mathcal{A},\mathcal{B}}^r$.
Suppose
 $F \in \rm{KD}_{\mathcal{A},\mathcal{B}}^r$, for $d=pq$ $(2<p<q)$,
 all off-diagonal entries are complex in $F$. Particularly, for $d=2p$,  equation (\ref{F-k-(d-k)}), i.e., $F_{i(i+k)}=\overline {F_{(i+k)i}}=\overline {F_{(i-(d-k))i}} = \overline {F_{i(i+(d-k))}}
$, implies  $F_{i(i+p)}=\overline {F_{i(i+p)}}$ for all $i\in\mathbb Z_{2p}$. It implies that $F_{i(i+p)}$ is also real except for the diagonal entries in the matrix $F$.
 So we consider  two different cases $d=pq$ $(2 < p < q)$ and $d=2p$ $(2 < p)$ separately.

Case 1. $d=pq$ $(2 < p < q)$. For off-diagonal entries in $F$, we consider three cases gcd$(k,d)=1$, gcd$(k,d)=p$ and gcd$(k,d)=q$.

(i) gcd$(k,d)=1$. $k$ cannot take $0, p, 2p, \ldots$, $(q-1)p$, $q, 2q, \ldots, (p-1)q$.   Hence  $k$ can take $pq-p-q+1$ values. The length of equation (\ref{F-k}) is $pq$. By equation (\ref{F-k-(d-k)}), we have $F_{i(i+k)}= \overline {F_{i(i+(pq-k))}}$. Thus $F_{i(i+k)}$ and $F_{i(i+(pq-k))}$ are in a common category for all $i\in\mathbb Z_{pq}$. Therefore, $F_{i(i+k)}$ for all $i\in\mathbb Z_{pq}$ can be classified into $\frac{pq-p-q+1}{2}$ categories.

(ii) gcd$(k,d)=p$, i.e., $k=p, 2p, \ldots$, $(q-1)p$.
If $k=p$, then
\begin{equation}
\label{pq-F-gcd=p}
F_{(i-p)i}=F_{i(i+p)}=F_{(i+p)(i+2p)}=\cdots=F_{(i+(q-1)p)i}
\end{equation}
by equation (\ref{F-kt}). The length of equation (\ref{pq-F-gcd=p}) is $q$. Equation (\ref{pq-F-gcd=p}) means that $F_{i(i+p)}=F_{i'(i'+p)}$ when $i\equiv i' \mod p$, where $i'\in\mathbb Z_{p}$.
Therefore, $F_{i(i+p)}$ for all $i\in\mathbb Z_{pq}$ can be classified into $p$ categories and each category has $q$ entries. By equation (\ref{F-k-(d-k)}), we have $F_{i(i+p)}= \overline {F_{i(i+(pq-p))}}$. Thus $F_{i(i+p)}$ and $F_{i(i+(pq-p))}$ for all $i\in\mathbb Z_{pq}$ are in a common category. Thus, there are still $p$ categories and each category has $2q$ entries. Similarly, $F_{i(i+k)}$ and $F_{i(i+(pq-k))}$ for all $i\in\mathbb Z_{pq}$ can be classified into $p$ categories  and each of which has $2q$ entries. Therefore, $F_{i(i+k)}$ for all $i\in\mathbb Z_{pq}$ can be classified into $\frac{(q-1)p}{2}$ categories and each category has $2q$ entries.

(iii) gcd$(k,d)=q$, i.e., $k=q, 2q, \ldots, (p-1)q$. Repeat the process of gcd$(k,d)=p$, it is easy to verify that $F_{i(i+k)}$ for all $i\in\mathbb Z_{pq}$ can be classified into $\frac{(p-1)q}{2}$ categories and each category has $2p$ entries.

Off-diagonal entries in $F$  can be classified into $\frac{pq-p-q+1}{2}+\frac{(q-1)p}{2}+\frac{(p-1)q}{2}=\frac{3pq-2(p+q)+1}{2}$ categories. Every category is determined by a nonreal value. Together with $pq$ independent real diagonal entries, there are $2\times\frac{3pq-2(p+q)+1}{2}+pq= (2p-1)(2q-1)$ independent real parameters. Thus $\dim$ KD$_{\mathcal{A},\mathcal{B}}^r= (2p-1)(2q-1)$.

Case 2. $d=2p$ $(2 < p)$. For off-diagonal entries in $F$, we consider three cases gcd$(k,d)=1$, gcd$(k,d)=2$ and gcd$(k,d)=p$.

(i) gcd$(k,d)=1$. Employing a similar discussion to that in (i) of Case 1, we can obtain that $F_{i(i+k)}$ for all $i\in\mathbb Z_{2p}$ can be classified into $\frac{p-1}{2}$ categories and each category has $4p$ entries. Every category is determined by a nonreal value. Therefore, there are $(p-1)$ independent real parameters.

(ii) gcd$(k,d)=2$. Applying a similar discussion to that in (ii) of Case 1, we have that $F_{i(i+k)}$ for all $i\in\mathbb Z_{2p}$ can be classified into $p-1$ categories and each category has $2p$ entries. Every category is determined by a nonreal value. Therefore, there are $2(p-1)$ real parameters.

(iii) gcd$(k,d)=p$, i.e., $k=p$.
From equation (\ref{F-kt}) we have
\begin{equation}
\label{gcd=p-d=2p}
F_{i(i+p)}=F_{(i+p)i}.
\end{equation}
The length of equation (\ref{gcd=p-d=2p}) is $2$. Equation (\ref{gcd=p-d=2p}) means $F_{i(i+p)}=F_{i'(i'+p)}$ when $i\equiv i' \mod p$, where $i'\in\mathbb Z_p$. Therefore, $F_{i(i+p)}$ for all $i\in\mathbb Z_{2p}$ can be classified into $p$ categories and each category has two entries. By equation (\ref{F-k-(d-k)}), $F_{i(i+p)}=\overline {F_{i(i+p)}}$ for all $i\in\mathbb Z_{2p}$. It means that $F_{i(i+p)}$ is real. Therefore, there are $p$ real parameters. Together with $2p$ independent real diagonal entries, there are $(p-1)+2(p-1)+p+2p=3(2p-1)$ independent real parameters in $F$. Thus $\dim \rm{KD}_{\mathcal{A},\mathcal{B}}^r= 3(2p-1)$.
Since $\rm span_{\mathbb{R}}(\mathcal{A}\cup \mathcal{B}\cup \mathcal{C}\cup \mathcal{D})\subseteq \rm{KD}_{\mathcal{A},\mathcal{B}}^r$,  it follows $\rm span_{\mathbb{R}}(\mathcal{A}\cup \mathcal{B}\cup \mathcal{C}\cup \mathcal{D})= \rm{KD}_{\mathcal{A},\mathcal{B}}^r$  for $d=pq$. This completes the proof.
\end{proof}

\section{The proof of Theorem \ref{theorem-pq}}
\label{Proof of theorem-pq}
\begin{proof}
Suppose $\mathcal{X}=\mathcal{B}$, $\mathcal{Y}=\mathcal{C}$, $\mathcal{Z}=\mathcal{D}$. Other cases can be similarly proved.
Let us first calculate the dimension of $\textrm{span}_{\mathbb{R}}(\mathcal{B}\cup\mathcal{C}\cup\mathcal{D})$.   Define a linear map $\Gamma:\mathbb{R}^{3pq}\rightarrow \rm{span}_{\mathbb{R}}(\mathcal{B}\cup\mathcal{C}\cup\mathcal{D})$, for each $((\mu_j),(\gamma_{ms}),(\eta_{m's'}))\in\mathbb{R}^{3pq}$,
\begin{equation}
\Gamma((\mu_j),(\gamma_{ms}),(\eta_{m's'}))=\sum_{j=0}^{pq-1}\mu_jb_j+\sum_{{m\in\mathbb Z_p}\atop{s\in\mathbb Z_q}}\gamma_{ms}\psi_{ms}+\sum_{{m'\in\mathbb Z_q}\atop{s'\in\mathbb Z_p}}\eta_{m's'}\varphi_{m's'}.\nonumber
\end{equation}
From the rank theorem, we have $\dim{\rm{span}}_{\mathbb{R}}(\mathcal{B}\cup\mathcal{C}\cup\mathcal{D})= 3pq-\dim(\rm{Ker} (\Gamma))$.

Suppose
\begin{equation}
\Upsilon=\sum_{j=0}^{pq-1}\mu_jb_j+\sum_{{m\in\mathbb Z_p}\atop{s\in\mathbb Z_q}}\gamma_{ms}\psi_{ms}+\sum_{{m'\in\mathbb Z_q}\atop{s'\in\mathbb Z_p}}\eta_{m's'}\varphi_{m's'}=0, \nonumber
\end{equation}
where $\mu_j,\gamma_{ms},\eta_{m's'}\in \mathbb{R}$. We will calculate $\dim({\rm Ker}(\Gamma))$ below.
Since $\Upsilon=0$, one has
\begin{equation}
\langle a_i|\Upsilon|b_j\rangle=\langle a_i|b_j\rangle(\mu_j+\gamma_{ms}+\eta_{m's'}) = 0,\ \forall\ i,j\in\mathbb Z_{pq},\nonumber
\end{equation}
where
\begin{eqnarray}
\left\{
\begin{aligned}
i \equiv m \mod p \\
i \equiv m' \mod q
\end{aligned}
\right.,
\ \ \
\left\{
\begin{aligned}
j \equiv s \mod q \\
j \equiv s' \mod p
\end{aligned}
\right..\nonumber
\end{eqnarray}
According to Chinese remainder theorem,  $j$ is uniquely determined in $\mathbb Z_{pq}$ by $s$ and $s'$. Since $\langle a_i|b_j\rangle=\frac{1}{\sqrt{d}}\omega_d^{ij} \neq 0$, one has $\mu_j+\gamma_{ms}+\eta_{m's'} = 0$, i.e.,
\begin{equation}
\label{lam+mu+ga=0}
\mu_{ss'}+\gamma_{ms}+\eta_{m's'} = 0,  \ \textrm{for}\  \forall\ m,s'\in\mathbb Z_p,\ m',s\in\mathbb Z_q.
\end{equation}
It follows $\mu_{ss'}+\gamma_{ms}+\eta_{m's'} =\mu_{ss'}+\gamma_{0s}+\eta_{m's'} $. Thus $\gamma_{ms}=\gamma_{0s}$, for $\forall\ m\in\mathbb Z_p,\ s\in\mathbb Z_q$.
Thus equation (\ref{lam+mu+ga=0}) can be written as
\begin{equation}
\label{ga-0s}
\mu_{ss'}+\gamma_{0s}+\eta_{m's'} = 0,  \ \textrm{for}\  \forall\ s'\in\mathbb Z_p,\ m',s\in\mathbb Z_q.
\end{equation}
It follows $\mu_{ss'}+\gamma_{0s}+\eta_{m's'}=\mu_{0s'}+\gamma_{00}+\eta_{m's'}$. Hence
\begin{equation}
\label{mu-ss'}
\mu_{ss'}=\mu_{0s'}+\gamma_{00}-\gamma_{0s},  \ \textrm{for}\  \forall\ s'\in\mathbb Z_p,\ s\in\mathbb Z_q.
\end{equation}
Substituting equation (\ref{mu-ss'}) into equation (\ref{ga-0s}), then
\begin{equation}
\label{eta}
\eta_{m's'}=-(\mu_{0s'}+\gamma_{00}),  \ \textrm{for}\  \forall\ s'\in\mathbb Z_p,\ m'\in\mathbb Z_q.
\end{equation}

As can be seen from equation (\ref{mu-ss'}) and  equation (\ref{eta}), all variables $\mu_{ss'}$, $\eta_{m's'}$ and $\gamma_{ms}$ can be expressed by  variables  $\mu_{0{s}'}$ and $\gamma_{0s}$ for all  ${s}'\in\mathbb Z_p$ and $s\in\mathbb Z_q$. one can obtain $\dim({\rm{Ker}} (\Gamma))=p+q$, then $\dim\rm{span}_{\mathbb{R}}(\mathcal{B}\cup \mathcal{C}\cup \mathcal{D}) =$ $3pq - p-q$.

Next, we prove the second statement. By equation (\ref{KD-probali}), it is easy to verify $\rm{ConvHull}(\mathcal{B}\cup \mathcal{C}\cup \mathcal{D})\subseteq \rm{KD}_{\mathcal{A},\mathcal{B}}^+$. Together with the fact that $\rm{ConvHull}(\mathcal{B}\cup \mathcal{C}\cup \mathcal{D})\subseteq \rm{span}_{\mathbb{R}}(\mathcal{B}\cup \mathcal{C}\cup \mathcal{D})$, it means $\rm{ConvHull}( \mathcal{B}\cup \mathcal{C}\cup \mathcal{D})\subseteq \rm{KD}_{\mathcal{A},\mathcal{B}}^+\cap \rm{span}_{\mathbb{R}}(\mathcal{B}\cup \mathcal{C}\cup \mathcal{D})$. Now we only need to prove $\rm{KD}_{\mathcal{A},\mathcal{B}}^+\cap \rm{span}_{\mathbb{R}}( \mathcal{B}\cup \mathcal{C}\cup \mathcal{D})\subseteq \rm{ConvHull}( \mathcal{B}\cup \mathcal{C}\cup \mathcal{D})$.

Let
\begin{equation}
\rho\in\rm{KD}_{\mathcal{A},\mathcal{B}}^+\cap \rm{span}_{\mathbb{R}}( \mathcal{B}\cup \mathcal{C}\cup \mathcal{D}).\nonumber
\end{equation}
There exists a vector $( (\mu_j), (\gamma_{ms}), (\eta_{m's'}))\in\mathbb{R}^{3pq}$ so that
\begin{equation}
\label{rho-span3}
\rho=\sum_{j=0}^{d-1}\mu_jb_j+\sum_{m,s}\gamma_{ms}\psi_{ms}+\sum_{m',s'}\eta_{m's'}\varphi_{m's'},\forall\ m,s'\in \mathbb Z_p,\forall\ m',s\in \mathbb Z_q.
\end{equation}
Since $\rho\in\rm{KD}_{\mathcal{A},\mathcal{B}}^+$, it implies $Q_{ij}(\rho)=|\langle a_i|b_j\rangle|^2(\mu_j+\gamma_{ms}+\eta_{m's'}) \geqslant 0,\forall\ i,j\in\mathbb Z_{pq}$,
 where
\begin{eqnarray}
\left\{
\begin{aligned}
i \equiv m \mod p \\
i \equiv m' \mod q
\end{aligned}
\right.,
\ \ \
\left\{
\begin{aligned}
j \equiv s \mod q \\
j \equiv s' \mod p
\end{aligned}
\right..\nonumber
\end{eqnarray}
According to Chinese remainder theorem,  $j$ is uniquely determined in $\mathbb Z_{pq}$ by $s$ and $s'$.
Equation (\ref{rho-span3}) can be written as
\begin{equation}
\label{j-ss'}
\rho=\sum_{s,s'}\mu_{ss'}b_{ss'}+\sum_{m,s}\gamma_{ms}\psi_{ms}+\sum_{m',s'}\eta_{m's'}\varphi_{m's'},\forall\ m,s'\in \mathbb Z_p,\forall\ m',s\in \mathbb Z_q.
\end{equation}
Since $|\langle a_i|b_j\rangle|^2\geq0$, one has $\mu_j+\gamma_{ms}+\eta_{m's'} \geq0$, i.e.,
\begin{equation}
\label{mu+gam+eta>=0}
\mu_{ss'}+\gamma_{ms}+\eta_{m's'} \geq 0,  \ \textrm{for}\  \forall\ m,s'\in\mathbb Z_p,\forall\ m',s\in\mathbb Z_q.
\end{equation}

In order to see that the coefficients of $\gamma_{ms}$ and $\eta_{m's'}$ are non-negative, without loss of generality, suppose that ${\rm min}_{m\in\mathbb Z_p}\{\gamma_{ms}\}=\gamma_{0s}$ for $s\in\mathbb Z_q$ and ${\rm min}_{m'\in\mathbb Z_q}\{\eta_{m's'}\}=\eta_{0s'}$ for $s'\in\mathbb Z_p$. Then equation (\ref{j-ss'}) can be written as
\begin{eqnarray}
\label{eq15}
\rho &=&\sum_{s,s'}\mu_{ss'}b_{ss'}+\sum_{m,s}(\gamma_{ms}-\gamma_{0s})\psi_{ms}+\sum_{m',s'}(\eta_{m's'}-\eta_{0s'})\varphi_{m's'} \nonumber\\
&\;&+\sum_{m,s}\gamma_{0s}\psi_{ms}+\sum_{m',s'}\eta_{0s'}\varphi_{m's'}
\end{eqnarray}
Notice that $\gamma_{ms}-\gamma_{0s}\geq0$ and $\eta_{m's'}-\eta_{0s'}\geq0$.
Now let us consider the relation among the projectors  $b_{ss'}$, $\psi_{ms}$ and $\varphi_{m's'}$. By
equations (\ref{pq-psi})and (\ref{pq-varphi}), we have
\begin{eqnarray}
\psi_{ms}&=&\frac{1}{q}(\sum_{k=0}^{q-1}a_{kp+m}+\sum_{k_1\neq k_2}\omega_q^{s(k_1-k_2)} |a_{k_{1}p+m}\rangle\langle a_{k_{2}p+m}|) \nonumber\\
&=&\frac{1}{p}(\sum_{l=0}^{p-1}b_{lq+s}+\sum_{l_1\neq l_2}\omega_p^{-m(l_1-l_2)}|b_{l_{1}q+s}\rangle\langle b_{l_{2}q+s}|),\nonumber
\end{eqnarray}
\begin{eqnarray}
\varphi_{m's'}&=&\frac{1}{p}(\sum_{k=0}^{p-1}a_{kq+m'}+\sum_{k_1\neq k_2}\omega_p^{s'(k_1-k_2)} |a_{k_{1}q+m'}\rangle\langle a_{k_{2}q+m'}|) \nonumber\\
&=&\frac{1}{q}(\sum_{l=0}^{q-1}b_{lp+s'}+\sum_{l_1\neq l_2}\omega_q^{-m'(l_1-l_2)}|b_{l_{1}p+s'}\rangle\langle b_{l_{2}p+s'}|).\nonumber
\end{eqnarray}
Hence,
\begin{equation}
\forall s\in\mathbb Z_q, \sum_{m=0}^{p-1}\psi_{ms}=\sum_{l=0}^{p-1}b_{lq+s}=\sum_{s'}b_{ss'}, \nonumber
\end{equation}
\begin{equation}
\forall s'\in\mathbb Z_p, \sum_{m'=0}^{q-1}\varphi_{m's'}=\sum_{l=0}^{q-1}b_{lp+s'}=\sum_{s}b_{ss'}. \nonumber
\end{equation}
Then
\begin{equation}
\label{psi=b}
\sum_{m,s}\gamma_{0s}\psi_{ms}=\sum_{s,s'}\gamma_{0s}b_{ss'},
\end{equation}
\begin{equation}
\label{varphi=b}
\sum_{m',s'}\eta_{0s'}\varphi_{m's'}=\sum_{s,s'}\eta_{0s'}b_{ss'}
\end{equation}
Substituting equations (\ref{psi=b}) and  (\ref{varphi=b}) into equation (\ref{eq15}), one can obtain
\begin{equation}
 \label{eq17}
\rho
=\sum_{s,s'}(\mu_{ss'}+\gamma_{0s}+\eta_{0s'})b_{ss'}+\sum_{m,s}(\gamma_{ms}-\gamma_{0s})\psi_{ms}+\sum_{m',s'}(\eta_{m's'}-\eta_{0s'})\varphi_{m's'}.
\end{equation}
Note that $\mu_{ss'}+\gamma_{0s}+\eta_{0s'}\geq0$ from equation (\ref{mu+gam+eta>=0}). Then all coefficients of $\rho$ in equation (\ref{eq17}) are
non-negative. Hence $\rho\in$ span$_{\mathbb{R^{+}}}( \mathcal{B}\cup \mathcal{C}\cup \mathcal{D})$. Thus $\rho\in$ ConvHull$(
\mathcal{B}\cup \mathcal{C}\cup \mathcal{D})$ since ${\rm Tr}\rho=1$. This completes the proof
\end{proof}
\section*{References}


\begin{thebibliography}{99}
\bibitem{negativity.1}
Lostaglio M, Belenchia A, Levy A, Hern\'{a}ndez-G\'{o}mez S, Fabbri N and Gherardini S 2023 Kirkwood-Dirac quasiprobability approach to the statistics of incompatible observables  {\it Quantum} {\bf  7} 1128
\bibitem{negativity.2}
Wagner R, Schwartzman-Nowik Z, Paiva I L, Te'eni A, Ruiz-Molero A, Barbosa R S, Cohen E and Galv\~{a}o E F 2024 Quantum circuits for measuring weak values, Kirkwood-Dirac quasiprobability distributions, and state spectra {\it Quantum Sci. Technol.}  {\bf 9} 015030
\bibitem{negativity.3}
Gherardini S and De Chiara G 2024 Quasiprobabilities in quantum thermodynamics and many-body systems: A tutorial (arXiv:2403.17138)
\bibitem{Arvidsson.2024}
Arvidsson-Shukur D R, Braasch Jr W F, De Bi\`{e}vre S, Dressel J, Jordan A N, Langrenez C, Lostaglio M, Lundeen J S and  Halpern N Y 2024 Properties and Applications of the Kirkwood-Dirac Distribution (arXiv: 2403.18899)
\bibitem{Halpern.2018}
Halpern N Y, Swingle B and Dressel J 2018 Quasiprobability behind the out-of-time-ordered correlator {\it Phys. Rev.} A {\bf 97} 042105
\bibitem{Budiyono.2023}
Budiyono A, Sumbowo J F, Agusta M K and Nurhandoko B E 2023 Characterizing quantum coherence based on the negativity and nonreality of the Kirkwood-Dirac
quasiprobability (arXiv:2309.09162)
\bibitem{entanglement}
Horodecki R, Horodecki P, Horodecki M and Horodecki K 2009 Quantum entanglement {\it Rev. Mod. Phys.} {\bf 81} 865
\bibitem{discord}
Ollivier H and Zurek W H 2001 Quantum discord: a measure of the quantumness of correlations {\it Phys. Rev. Lett.} {\bf 88} 017901
\bibitem{coherence}
Baumgratz T, Cramer M and Plenio M B 2014 Quantifying coherence {\it Phys. Rev. Lett.} {\bf 113} 140401
\bibitem{coherence.2017}
Streltsov A, Adesso G and Plenio M B 2017 Colloquium: Quantum coherence as a resource {\it Rev. Mod. Phys.} {\bf 89} 041003
\bibitem{nonlocality}
Brunner N, Cavalcanti D, Pironio S, Scarani V and Wehner S 2014 Bell nonlocality {\it Rev. Mod. Phys.} {\bf 86} 419
\bibitem{uncertainty}
Robertson H P 1929 The uncertainty principle {\it Phys. Rev.} {\bf 34} 163
\bibitem{Kirkwood.1933}
Kirkwood J G 1933 Quantum statistics of almost classical assemblies {\it Phys. Rev.} {\bf 44} 31
\bibitem{Dirac.1945}
Dirac P A 1945 On the analogy between classical and quantum mechanics {\it Rev. Mod. Phys.} {\bf 17} 195
\bibitem{Kolmogorov.1951}
Brookes B C and Kolmogorov A N 1951 Foundations of the theory of probability {\it The Math. Gaz.} {\bf 35} 292
\bibitem{Lundeen.2012}
Lundeen J S and Bamber C 2012 Procedure for direct measurement of general quantum states using weak measurement {\it Phys. Rev. Lett.} {\bf 108} 070402
\bibitem{Bamber.2014}
Bamber C and Lundeen J S 2014 Observing Diracs classical phase space analog to the quantum state {\it Phys. Rev. Lett.} {\bf 112} 070405
\bibitem{Thekkadath.2016}
Thekkadath G S, Giner L, Chalich Y, Horton M J, Banker J and Lundeen J S 2016 Direct measurement of the density matrix of a quantum system {\it Phys. Rev. Lett.} {\bf 117} 120401
\bibitem{Pusey.2014}
Pusey M F 2014 Anomalous weak values are proofs of contextuality {\it Phys. Rev. Lett.} {\bf 113} 200401
\bibitem{Dressel.2015}
Dressel J 2015 Weak values as interference phenomena {\it Phys. Rev.} A {\bf 91} 032116
\bibitem{Lupu.2022}
Lupu-Gladstein N, Yilmaz Y B, Arvidsson-Shukur D R, Brodutch A, Pang A O, Steinberg A M and Halpern N Y 2022 Negative quasiprobabilities enhance phase estimation in quantum optics experiment {\it Phys. Rev. Lett.} {\bf 128} 220504
\bibitem{Arvidsson.2020}
Arvidsson-Shukur D R, Yunger Halpern N, Lepage H V, Lasek A A, Barnes C H and Lloyd S 2020 Quantum advantage in postselected metrology {\it Nat. Commun.}
  {\bf 11} 3775
\bibitem{Jenne.2022}
Jenne J H and Arvidsson-Shukur D R 2022 Unbounded and lossless compression of multiparameter quantum information {\it Phys. Rev.} A {\bf 106} 042404
\bibitem{Levy.2020}
Levy A and Lostaglio M 2020 Quasiprobability distribution for heat uctuations in the quantum regime {\it PRX Quantum} {\bf 1} 010309
\bibitem{Lostaglio.2020}
Lostaglio M 2020 Certifying quantum signatures in thermodynamics and metrology via contextuality of quantum linear response {\it Phys. Rev. Lett.} {\bf 125} 230603
\bibitem{Spekkens.2008}
Spekkens R W 2008 Negativity and contextuality are equivalent notions of nonclassicality {\it Phys. Rev. Lett.} {\bf 101} 020401
\bibitem{Lostaglio.2018}
Lostaglio M 2018 Quantum fluctuation theorems, contextuality, and work quasiprobabilities {\it Phys. Rev. Lett.} {\bf 120} 040602
\bibitem{Schmid.KD.representations}
Schmid D, Baldij\~{a}o R D, Y\={\i}ng Y, Wagner R and Selby J H 2024 Kirkwood-Dirac representations beyond quantum states (and their relation to noncontextuality) (arXiv:2405.04573)
\bibitem{ArvidssonJPA.2021}
Arvidsson-Shukur D R, Drori J C and Halpern N Y 2021 Conditions tighter than noncommutation needed for nonclassicality {\it J. Phys. A: Math. Theor.} {\bf 54} 284001
\bibitem{BievrePRL.2021}
De Bi\`{e}vre S 2021 Complete incompatibility, support uncertainty, and Kirkwood-Dirac nonclassicality {\it Phys. Rev. Lett.} {\bf 127} 190404
\bibitem{BievreJMP.2023}
De Bi\`{e}vre S 2023 Relating incompatibility, noncommutativity, uncertainty, and Kirkwood-Dirac nonclassicality {\it J. Math. Phys.} {\bf 64} 022202
\bibitem{Xu.PLA2024}
Xu J W 2024 Kirkwood-Dirac classical pure states {\it Phys. Lett.} A {\bf 510} 129529
\bibitem{Yang.2023}
Yang Y H, Zhang B B, Wang X L, Geng S J and Chen P Y 2023 Characterizing Kirkwood-Dirac nonclassicality and uncertainty diagram based on discrete Fourier transform {\it Entropy} {\bf 25} 1075
\bibitem{XuPRA.2022}
Xu J W 2022 Classification of incompatibility for two orthonormal bases {\it Phys. Rev.} A {\bf 106} 022217
\bibitem{Fiorentino.2022}
Fiorentino V and Weigert S 2022 Uncertainty relations for the support of quantum states {\it J. Phys. A: Math. Theor.} {\bf 55} 495305
\bibitem{Langrenez.2023}
Langrenez C, Arvidsson-Shukur D R and De Bi\`{e}vre S 2024 Characterizing the geometry of the Kirkwood-Dirac-positive states {\it J. Math. Phys.} {\bf 65} 072201
\bibitem{Bargmann.2001}
Mukunda N, Chaturvedi S and Simon R 2001 Bargmann invariants and off-diagonal geometric phases for multilevel quantum systems: A unitary-group approach {\it Phys. Rev.} A {\bf 65} 012102
\bibitem{Bargmann.2003}
Mukunda N, Ercolessi E, Marmo G, Morandi G and Simon R. Bargmann invariants, null phase curves, and a theory of the geometric phase {\it Phys. Rev.} A {\bf 67} 042114
\bibitem{Bargmann.2020}
Akhilesh K S, Arvind A, Chaturvedi S, Mallesh K S and Mukunda N 2020 Geometric phases for finite-dimensional systems---The roles of Bargmann invariants, null phase curves, and the Schwinger-Majorana SU (2) framework {\it J. Math. Phys.} {\bf 61} 072103
\bibitem{Bargmann.2024}
Oszmaniec M, Brod D J and Galv\~{a}o E F 2024 Measuring relational information between quantum states, and applications  {\it New J. Phys.} {\bf 26} 013053
\bibitem{Bargmann.arXiv2024}
Fernandes C, Wagner R, Novo L and Galv\~{a}o E F 2024 Unitary-invariant witnesses of quantum imaginarity (arXiv:2403.15066)
\bibitem{Hiriart.2004}
Hiriart-Urruty J B and Lemar\'{e}chal C 2004 Fundamentals of convex analysis {\it Spr. Sci. Bus. Med.}
\bibitem{Xu.2023}
Xu J W 2022 Kirkwood-Dirac classical pure states (arXiv:2210.02876)
\bibitem{He-J.2024}
He J Y and Fu S S 2024 Nonclassicality of the Kirkwood-Dirac quasiprobability distribution via quantum modification terms {\it Phys. Rev.} A {\bf 109} 012215
\bibitem{Quasiprobability.heat.2020}
Levy A and Lostaglio M 2020  Quasiprobability distribution for heat fluctuations in the quantum regime  {\it  PRX Quantum} {\bf 1} 010309
\bibitem{heat.flow.2024}
Comar N E, Cius D, Santos L F, Wagner R and Amaral B 2024  Contextuality in anomalous heat flow (arXiv:2406.09715)
\end{thebibliography}
\end{document}